\newtheorem{theorem}{Theorem}
\newtheorem{lemma}{Lemma}
\newtheorem{definition}{Definition}
\newtheorem{remark}{Remark}
\begin{document}

\title{Coordination and Communication of Autonomous Subsystems in Cyber Physical Systems: \\A Mechanism Learning Approach}
\author{Husheng Li
\thanks{H. Li is with the Department of Electrical Engineering and Computer Science, the University of Tennessee, Knoxville, TN (email: husheng@eecs.utk.edu, phone number: 865-974-3861, fax number: 865-974-5483,  address: 310 Ferris Hall, 1508 Middle Drive, Knoxville, TN, 37996). This work is supported by NSF 1525418.}}
\maketitle

\begin{abstract}
In the control of many autonomous subsystems, such as autonomous vehicles or UAV networks, a centralized control may be hindered by the prohibitive complexity, limited communication bandwidth, or private information of subsystems. Therefore, it is desirable for the control center to coordinate the controls of subsystems by designing mechanisms such as pricing, which makes the local optimizations of subsystem dynamics also maximize the reward of the total system, namely the social welfare. The economics framework of mechanism design is employed for the coordination of the autonomous subsystems. To address the challenge of dynamics, which are not considered in conventional economics mechanism design, and the complexity of private information, the approaches of geometrization and machine learning are employed, by endowing different geometric structures to the problem. The theoretical framework is applied in the context of urban aerial mobility, where the numerical simulations show the validity of the proposed framework. 
\end{abstract}

\section{Introduction} In recent year, the autonomous control of systems has received substantial studies, particularly, motivated by the demand of autonomous driving and UAV networks, et al. The control may involve the coordination of many autonomous subsystems; e.g., each vehicle is a subsystem which controls its own dynamics. However, the traditional control theory faces substantial challenges due to the reasons, which makes centralized control difficult:
\begin{itemize}
\item Complexity: The size of the system under control is becoming prohibitively large, which may consists hundreds or even thousands of subsystems (e.g., a large UAV network). One of the reasons is the new generation of communication systems; e.g., the 5G New Radio (NR) can support the wireless connection of millions of devices in the mMTC mode. This complexity brings challenges to the computing, The trend of edge computing desires to distribute the computing to the edges, instead of a centralized computing, thus reducing the burden on the communications and computing center. 

\item Uncertainty: Even if the system provides sufficient communication and computing resources, it may still be impossible for the center to compute the control actions for the subsystems. It is because some subsystems may not share private information with the control center. For example, if each subsystem is an economical agent, its utility function may be kept secret, in order to maximize its own reward. Therefore, the control center needs to coordinate the dynamics of subsystems subject to the uncertainties.
\end{itemize}

We propose to study the coordination of autonomous subsystems in the framework of economics. One of the key problems to be solved in economics is how to design the rules to maximize the social welfare subject to the uncertainty of individual agents, as well as their incentives. For example, in auctions, each agent has its own value on the object, while it may not want to disclose its true value in the bid. The mechanism that the agent with the highest bid wins and is charged at the second highest bid has been shown to be incentive compatible, which makes disclosing the true value the optimal bidding strategy. The generic framework of such economic problems is the theory of mechanism design, which studies how to design efficient mechanism such that the local reward maximization of each agent results in the maximization of social welfare. The most usual mechanism is the pricing of commodities, such as auctions or digital goods. The mechanism design can also be considered as a distributed computing problem, in which the center computes the desired function (e.g., the optimal prices) given the information provided by the agents. Therefore, it is also highly related to the research on communication complexity. 

In this paper, we employ the framework of mechanism design to study the coordination of autonomous control of subsystems, where the private information of each subsystem is the utility function, and the control center computes prices according to the responses from the subsystems. Compared with the standard theory of mechanism design, the following new challenges are incurred in the context of autonomous control:
\begin{itemize}
\item Dynamics: In traditional studies of mechanism design, the setup is static. However, in the autonomous control, the utility functions of different subsystems may change with their own system states. Therefore, the mechanism of coordination needs to be adaptive to the system states. 
\item Complex parameters: In the context of autonomous control, the private information of each subsystem is the utility function. The functional space of the private information makes the design substantially more difficult than that of auctions in which the private information is real number (namely the value of object). 
\end{itemize}

To address the above new challenges, we employ the following approaches for the mechanism of autonomous control:
\begin{itemize}
\item Geometrization: We will cast the mechanism design in a geometric framework, in which the setup is formulated as a vector bundle. Moreover, the geometric structure of mechanism, namely the relationship between distribution, foliation and level set, will be used to guide the mechanism design.
\item Machine learning: The unknown information of each subsystem needs to be learned from their responses to the coordination during the operation. The estimation of the utility function will be accomplished as machine learning using samples and different assumptions on the geometric structure.  
\end{itemize}

\section{Related Works}
\subsection{System of Systems}

\subsection{Game Design for Optimization} In \cite{NaLi2013} and \cite{NaLi2014}, the goal function, determined by the actions of multiple agents, is optimized by designing local games for different agents. The game design is similar to the mechanism design framework in which the desired output is also obtained by designing proper games for different subsystems. However, the work in \cite{NaLi2013} and \cite{NaLi2014} are substantially different from the studies in this paper in the following aspect: (a) Goal function: In \cite{NaLi2013,NaLi2014} the goal function is known to all agents, while in our context the goal function is unknown due to the private information of different sub-systems. (b) Communications: In \cite{NaLi2013,NaLi2014} the agents exchange information for the purpose of optimization, while the sub-systems are not allowed to communicate. (c) Dynamics: The optimization in \cite{NaLi2013,NaLi2014} is stationary, while the sub-systems do not exchange information in the mechanism design. (d) Privacy: In \cite{NaLi2013,NaLi2014}, there is no private information. The reason for the distributed computing is the prohibitively high complexity of the goal function. (e) Incentive: In the context of this paper, the sub-systems are selfish; therefore, the mechanism design needs to be incentive compatible. In a contrast, the distributed optimization in \cite{NaLi2013,NaLi2014} is collaborative in the agents, which simply follow the given rule. 

\subsection{Network Utility Maximization} Essentially the mechanism design is to solve the conflict of demand and supply. In the context of control of autonomous subsystems, the coordination is to resolve the conflict of subsystems subject to limited resources, such as space in the air traffic control. Such a demand-supply relationship, as the focus of studies in economics, has been exploited by F. Kelly in the celebrated theory of network utility maximization (NUM) \cite{Kelly1997,Kelly1998,Kelly2004,Chiang2007}. Although this generic framework applies to any data networks, the networking in NCSs is significantly different from traditional data networks, due to the different sources of communication demand, and the different purposes of supply: (A) \textit{Demand side}: the demand of traditional data network is explicit, since it directly consists of the requirements of throughput and delay; meanwhile that of NCS is implicit, since the controlled physical dynamics do not explicitly demand bits and latency; (B) \textit{Supply side}: In traditional networks, the freshness of data packets either does not change (e.g., in elastic-traffic networks without deadline \cite{Tassiulas1992,Tassiulas1995}), or changes with a binary logic (either catch or miss the deadline \cite{Dua2007}), or changes linearly. The major difference between the theory of NUM and the coordinated control is that the setup in NUM is linear, namely the social wealth function is the sum of the utility functions and the constraints of link capacities are also linear; in a contrast, the coordinated control in this paper handled more generic nonlinear problems, which is much more difficult. 

\subsection{Decentralized Control} In decentralized control \cite{Siljak}, the control actions are taken by the agents, which is the same as the coordinated control. However, it is because of the lack of control center and information collection in the decentralized control. In the coordinated control, there does exist a center; however, it cannot directly inform each agent the control action, since (a) it does not know the utility functions of agents, and thus the optimal control action; (b) it is the agents that make the decision. In the decentralized control, the uncertainty is the lack of global state information, while it is the unknown private information at each sub-system in the coordinated control. Moreover, it is assumed that each agent is selfless in the decentralized control, while each sub-system is selfish and aims at maximizing its own reward in the coordinated control. The design of many decentralized control is analytic, while that of the coordinated control is data driven. These differences are summarized in Fig. \ref{fig:radar}.

\begin{figure}
  \centering
  \includegraphics[scale=0.55]{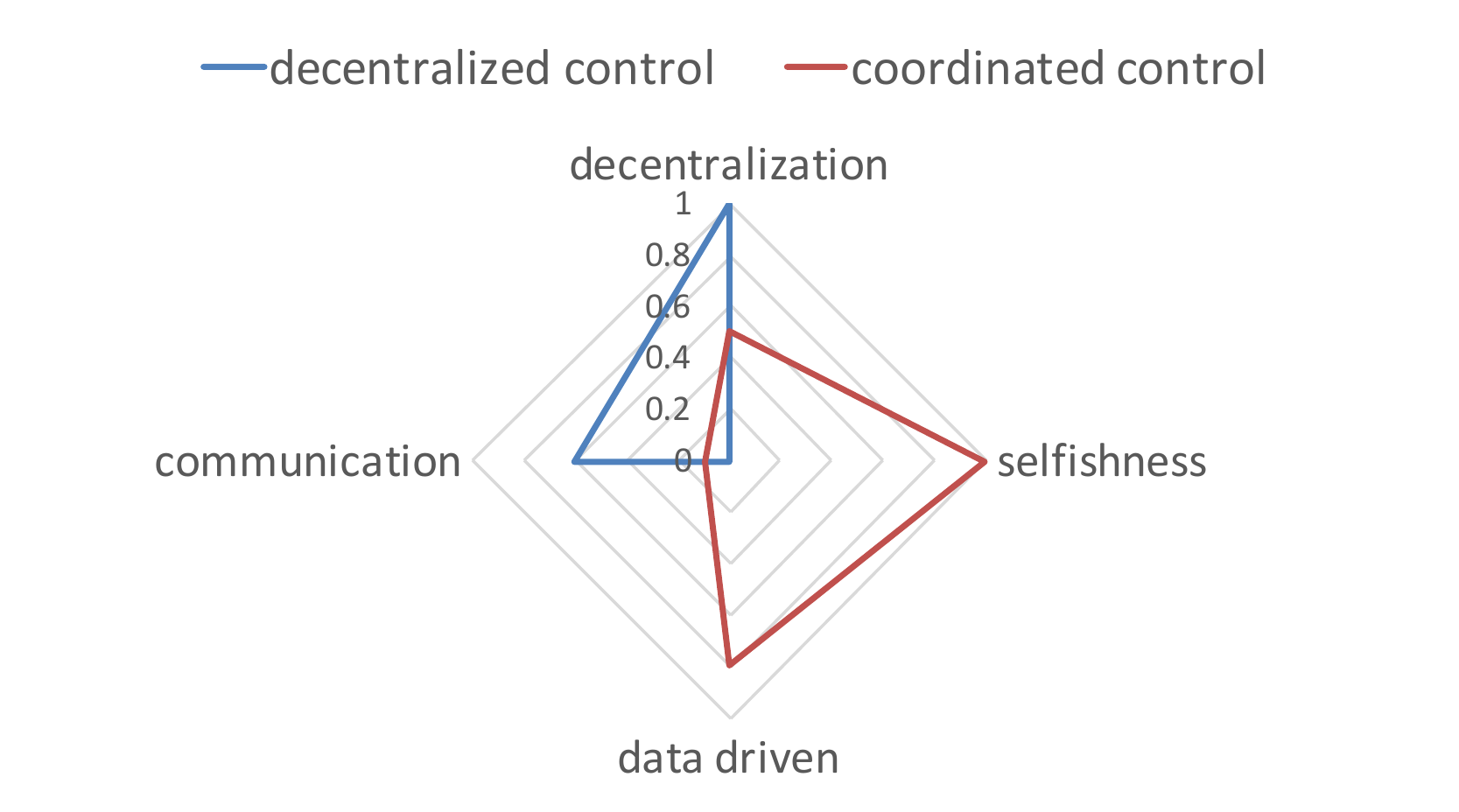}
  \caption{Comparison between the decentralized control and coordinated control.}\label{fig:radar}
  \vspace{-0.1in}
\end{figure}

\subsection{Mechanism Learning} The study on mechanism design was originated by R. Myerson. A set-theoretic approach was proposed in \cite{Hurwicz2006} for mechanism design; the counterpart for the continuous value case was described in \cite{Hurwicz2006}, by exploiting calculus over manifolds. In \cite{Vohra2011}, the linear programming approach was leveraged for the mechanism design, with the emphasis on the incentive compatibility. In \cite{Borgers2014}, a comprehensive introduction to mechanism design is given. 
In traditional studies, the mechanism is designed using explicit analysis (e.g., linear programming \cite{Vohra2011} and level sets \cite{Hurwicz2006}) by human researchers. However, except for the simple case of two agents and a single object \cite{Vohra2011}, the optimal solutions to most mechanism design problems have not been identified, probably due to the high complexity of the problem. In the last two decades, there has been a trend to design the mechanism using machine learning, which takes the numerical methodology based on samples and is coined \textit{automated mechanism design} (AMD) \cite{Conitzer2002}. Thanks to the rapidly increasing computational capabilities of modern computers, there have been substantial breakthroughs in the area of AMD. Essentially, the AMD approach is to use sufficiently complex functions (e.g., SVM or deep neural network) to approximate the goal function $F(\theta_1,...,\theta_n)$, while keeping the reports for the agents incentive compatible. The samples are obtained by randomly generating the private parameters $\{\theta_n\}_{n=1,...,N}$. The output of the learning procedure is the functions $h=(g,\mathbf{p})$ and $\{\psi_n\}_n$ in the given forms (e.g., neural network). Traditional studies on AMD \cite{Conitzer2002,Conitzer2004,Guo2010,Sui2013} use heuristic searches. The deep learning approach is employed for AMD in \cite{Dutting2019}, while SVM is applied in \cite{Narasinimhan2016}. The sample complexity of AMD has been analyzed in \cite{Balcan2005}. However, such a learning methodology faces the following severe challenges in the context of spectrum markets for communications and sensing: (a) {Prior Distribution}: In the spectrum market scenario, the major private parameters are the utility functions of the agents. If the utility functions are nonparametric, the sample space is the function space. It is challenging to devise a good prdistribution for the utility functions, instead of for scalar or vector private parameters as in existing mechanism learning. Parameterized functions with predetermined forms may not well generalize. (b) {Offline Learning}: Existing mechanism learning algorithms are mostly offline, without considering the feedbacks of the agents during the operation. It could be more effective to learn the mechanism in an online manner, similarly to reinforcement learning. (c) {Blackbox}: Most existing mechanism learning algorithms are designed in a blackbox manner, without exploiting the intrinsic structure of the mechanism, which substantially decreases the efficiency of mechanism learning.  
Law enactment is a real world practice of mechanism design. A law is seldom set with offline and blackbox computations with artificial distributions. It is of ultimate importance that physical laws of mechanism design be incorporated in the learning process.

\section{System Model}
In this section, we introduce the model of system dynamics and control scheme, based on which a mechanism design framework is formulated. 

\subsection{Model of Dynamics}
We consider a system consisting of $N$ autonomous sub-systems coordinated by a center. We assume a discrete time dynamics, where the state of subsystem $n$ at time $t$ is denoted by $\mathbf{x}_n(t)\in \mathbb{R}^d$. The overall system state is the stack of the individual states, namely $\mathbf{x}=(\mathbf{x}_1^T,...,\mathbf{x}_N^T)^T$. In the generic case, the dynamics of subsystem $n$ is given by
\begin{eqnarray}\label{eq:generic_dynamics}
\mathbf{x}_n(t+1)=f(\mathbf{x}_n(t),\mathbf{u}_n(t),\mathbf{w}_n(t)),
\end{eqnarray}
where $\mathbf{u}_n\in \mathbb{R}^d$ is the control action of subsystem $n$, $\mathbf{w}_n$ is random perturbations and $f$ is the subsystem state evolution law. Note that we assume that the dynamics of subsystem $n$ is not directly impacted by the behaviors of other subsystems; instead, they are coupled indirectly via the coordination of the center. Our future research will extend to the generic case with coupled dynamics of the subsystems. A special but very useful case is the linear dynamics, which is given by
\begin{eqnarray}\label{eq:linear_dynamics}
\mathbf{x}_n(t+1)=\mathbf{A}_{n}\mathbf{x}_n(t)+\mathbf{B}_{n}\mathbf{u}_n(t)+\mathbf{w}_n(t),
\end{eqnarray}

We assume that each subsystem $n$ has a von-Neumann-Morgenstern utility function $U_n$ as a function of the system state and the control actions\footnote{Here we assume that each subsystem is rational and satisfies the four von-Neumann-Morgenstern axioms \cite{Peterson2017}}. A special case of the utility function is the negative of a quadratic function, namely
\begin{eqnarray}\label{eq:LQG0}
U_n(t+1)&=&-\left(\mathbf{x}_n(t+1)-\mathbf{x}_n^0\right)^T\mathbf{Q}_n\left(\mathbf{x}_n(t+1)-\mathbf{x}_n^0\right)\nonumber\\
&-&\mathbf{u}_n(t)^T\mathbf{R}_n\mathbf{u}_n(t),
\end{eqnarray}
where $\mathbf{x}_n^0$, $\mathbf{Q}_n$ and $\mathbf{R}_n$ are the parameters of the utility function. We consider $\mathbf{x}_n^0$ as the desired system state, while the matrices $\mathbf{Q}_n$ and $\mathbf{R}_n$ are assumed to be positive definite. We assume that the coordinator knows $\mathbf{x}_n^0$, but not $\mathbf{Q}_n$ and $\mathbf{R}_n$.

We consider a coordinator which can coordinate the operations of the autonomous subsystems. It is assumed that the coordinator can observe all the actions $\left\{\mathbf{u}_n\right\}_n$ and states of the systems $\left\{\mathbf{x}_n\right\}_n$, while not knowing their utility functions. The coordinator does not control the autonomous systems using direct instructions. Instead, it sets a game for each subsystem $n$ with the payoff $R_n(\mathbf{u}_n(t),\mathbf{x}_n)$, which is given by
\begin{eqnarray}
R_n(\mathbf{u}_n(t),\mathbf{x}_n)=U_n(\mathbf{x},\mathbf{u})+T(\mathbf{x}),
\end{eqnarray}
where $U_n$ is the utility of the subsystem itself and $T$ is an extra reward given to the subsystem. Then, the subsystem $n$ will take an action $\mathbf{u}_n^*$ that maximizes its own reward, which satisfies 
\begin{eqnarray}
\nabla_{\mathbf{u}} R_n(\mathbf{u})|_{\mathbf{u}_n=\mathbf{u}^*_n}=0.
\end{eqnarray}

The mechanism design is how to design the game payoff rule $R(\mathbf{x},\cdot)$, in order to maximize the social welfare, namely
\begin{eqnarray}\label{eq:social_welfare}
\max_{\left\{P_n(\mathbf{x}(t),\cdot)\right\}_n}\sum_{n=1}^N U_n(\mathbf{x}_{n}(t+1),\mathbf{u}_n(t))-\Psi(\mathbf{x}(t+1)),
\end{eqnarray}
where $\Psi$ is a regulation function determined by the overall system state.
Note that such an optimization is carried out for each possible $\mathbf{x}$, or for each time. For simplicity, we consider only this myopic strategy and leave the long-term reward to our future study.

\subsection{Mechanism Design Framework}
\begin{figure}
  \centering
  \includegraphics[scale=0.45]{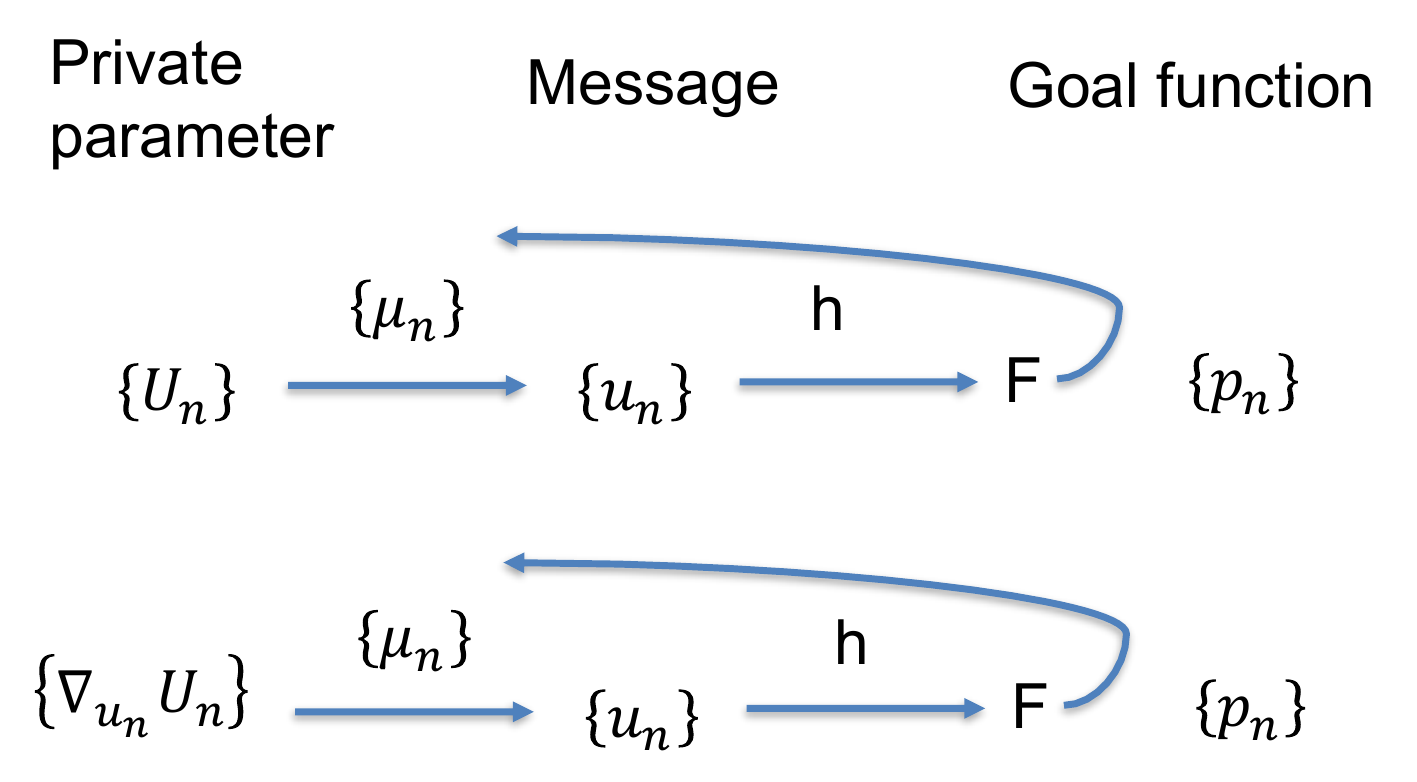}
  \caption{Elements of the mechanism learning.}\label{fig:structure}
  \vspace{-0.1in}
\end{figure}

Using the terminology of mechanism design, the elements of the mechanism are given as follows:
\begin{itemize}
\item Parameter space $\Theta$: The private parameter of subsystem $n$ is the the utility function $U_n$, which is kept to only itself.
\item Game function $g_n$: Since the subsystems are assumed to be rational, the actions are calculated by each subsystem by maximizing the reward function, namely
\begin{eqnarray}\label{eq:game3}
g_n(U_n,\mathbf{u}_n)=\nabla_{\mathbf{u}_n}U_n(\mathbf{x}_n(t+1),\mathbf{u}_n(t))-\mathbf{p}_n(\mathbf{x})
\end{eqnarray}
\item Incentive Compatibility: 
The control action is to maximize the reward and is thus given by
\begin{eqnarray}\label{eq:incentive}
\mathbf{u}_n^*=(\nabla_{\mathbf{u}_n}U_n)^{-1}\left(\mathbf{p}_n^*\right).
\end{eqnarray}
\item Message space $M$: There is no explicit message exchange since the subsystems do not report explicit information about their private parameters. However, the actions taken by the subsystems can be considered as the messages, since they carry information about their own utility functions and are observed by the coordinator. We observe that the output function $F$ actually depends on only the gradients $\{\nabla_{\mathbf{u}_n}U_n\}_n$, thus can also be considered as a vector field of $\mathbb{R}^d$, which facilitates the subsequent geometric argument. Therefore, the necessary information for computing $F$ is the vector fields $\{\nabla_{\mathbf{u}_n}U_n\}_n$ of different subsystems. 
\item Output functional $F$: The output function is the optimal actions, namely
\begin{eqnarray}
F_{\mathbf{x}}\left(\left\{U_n\right\}_n\right)=\{\mathbf{u}_n(\mathbf{x})\}_{n},
\end{eqnarray}
where $\mathbf{x}$ is considered as the parameter while the argument is the set of utility functions. Due to the social welfare function is given in (\ref{eq:social_welfare}), the optimal price is given by
\begin{eqnarray}\label{eq:goal}
\{\mathbf{u}_n(\mathbf{x})\}_{n}^*&=&\arg\max_{\left\{\mathbf{p}_n\right\}_n}\sum_{n=1}^N U_n(\mathbf{x}_{n}(t+1),\mathbf{u}_n)\nonumber\\
&-&\Psi(\mathbf{x}(t+1)),
\end{eqnarray}
where $\mathbf{x}_n(t+1)$ is a function of $\mathbf{x}_n(t)$ and $\mathbf{u}_n(\mathbf{p}_n)$, thus being a function of $\mathbf{u}_n(\mathbf{p})$ is determined by (\ref{eq:incentive}).

\end{itemize}
These basic elements in the mechanism design are illustrated in Fig. \ref{fig:structure}.
The major challenges to the mechanism design in the context of coordinated control are the following:
\begin{itemize}
\item Complex Expression: The explicit expression for the optimal control action, which is determined by (\ref{eq:goal}) and (\ref{eq:incentive}), could be very complicated, or even prohibitive.

\item Information Deficiency: The goal function is based on the knowledge of utility functions. However, the systems may not disclose them directly to the coordinator. It is challenging to uncover the information from the observed control actions.
\end{itemize}

\subsection{Fiber Bundle Modeling}

Different from the standard mechanism problems, the mechanism of the coordinated control is dynamic, namely the vector field $\nabla_{\mathbf{u}_n} U_n$ changes with time, since $U_n$ is dependent on the position $\mathbf{x}_n$. Therefore, we can consider each point $(\mathbf{x}_n,\mathbf{u}_n)$ in $\mathbf{R}^{2d}$ carrying a vector field $\nabla_{\mathbf{u}_n} U_n(\mathbf{x}_n)$. This forms a fiber bundle\footnote{Informally speaking, a fiber bundle means attaching to each point $\mathbf{x}$ in the base manifold a structure called fiber.}, where the fiber at each point $\mathbf{x}_n$ in $\mathbb{R}^d$ is a vector field $\nabla_{\mathbf{u}_n} U_n(\mathbf{x}_n)$ over the space of $\mathbf{u}_n$. The generic case is  illustrated in Fig. \ref{fig:bundle} (a), while the case of $x_n\in \mathbb{R}$ and $\mathbf{u}_n\in \mathbb{R}^2$ is shown in Fig. \ref{fig:bundle} (b). Such a geometric model will be used throughout this paper.

\begin{figure}
  \centering
  \includegraphics[scale=0.4]{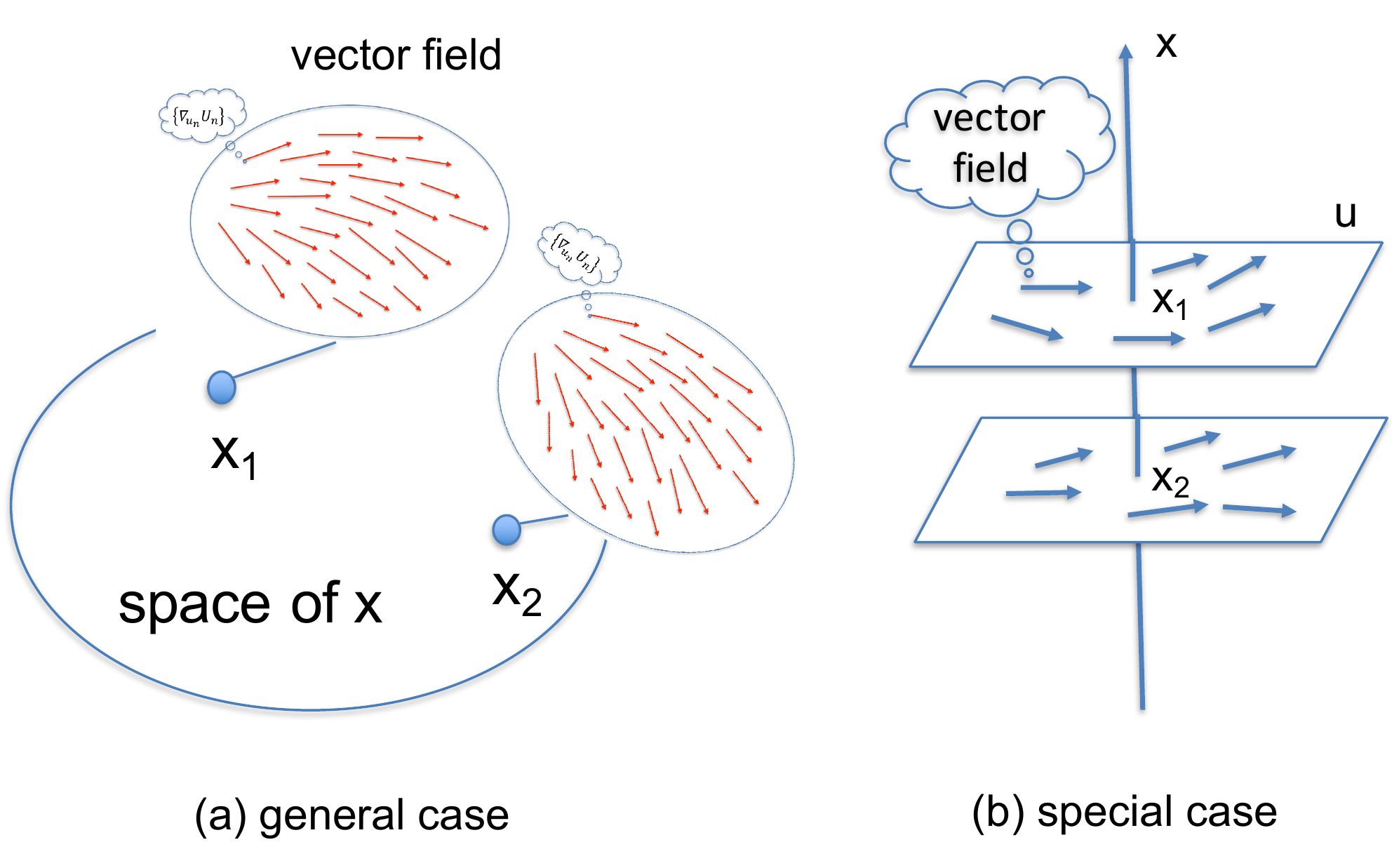}
  \caption{Illustration of the bundle of vector fields.}\label{fig:bundle}
  \vspace{-0.1in}
\end{figure}

\section{Mechanism Learning: Stationary Case}\label{sec:stationary}
In this section, we consider the case in which the learning procedure of mechanism is much faster than the physical dynamics, such that we can assume that the system state $\mathbf{x}$ is constant. This assumption is valid for cases of slow dynamics. For theoretical analysis, we assume that, given the games set by the coordinator, the subsystems reveal their actions but do not take real actions. Only when the `virtual' actions converge to a stationary one, real actions are taken to proceed to the next stage. This polling-action procedure is illustrated in Fig. \ref{fig:polling} and can also allow real actions during the polling procedure. 

\begin{figure}
  \centering
  \includegraphics[scale=0.45]{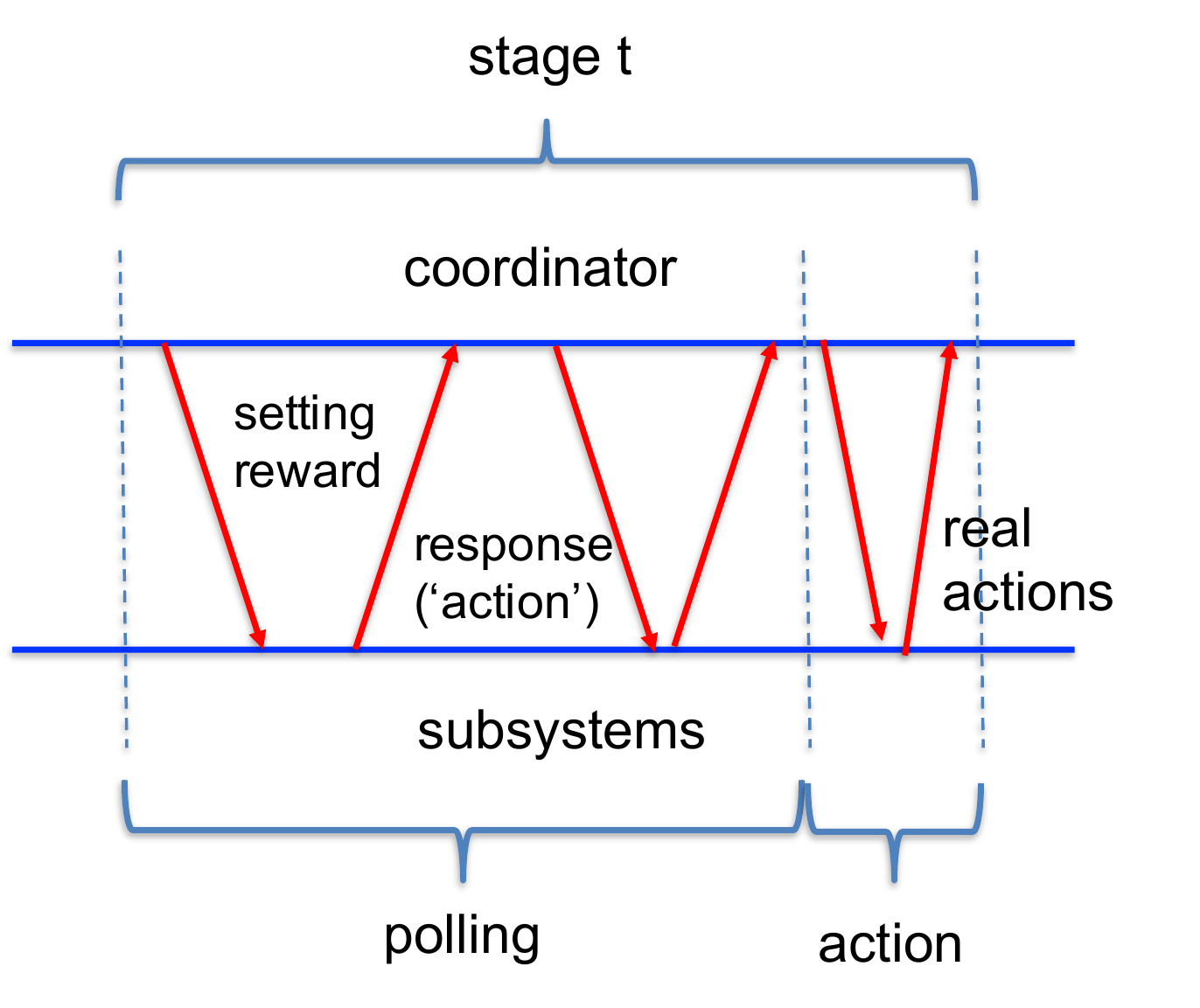}
  \caption{The polling-action mechanism in stationary situations.}\label{fig:polling}
  \vspace{-0.1in}
\end{figure}

\subsection{Strategies of learning}
The key elements in the mechanism learning include the message functions $\{\mu_n\}$ and the decision function $h$. The received message is the observation on the control action $\mathbf{u}_n$, from which the coordinator needs to extract the essential information needed for the computation, namely the gradients $\{\nabla_{\mathbf{u}_n}U_n\}$ for the decision making $h$. Once $\{\nabla_{\mathbf{u}_n}U_n\}$ is extracted from the messages, the computation of $h$, namely the output prices, is carried out by using (\ref{eq:goal}), which is straightforward. Therefore, the key challenge in the mechanism learning is to extract the information on the gradients $\{\nabla_{\mathbf{u}_n}U_n\}$ from the observed actions $\mathbf{u}_n$. The following three strategies for the learning procedure will be adopted and will be detailed subsequently:
\begin{itemize}
\item Parametric learning: We assume that the utility functions are quadratic functions (thus the vector bundle has a special structure) and then learn the parameters from the observed actions. Note that quadratic utility functions are widely used in control theory and economics. It also serves as a `reference' model for nonparametric cases. 
\item Nonparametric learning based on fictitious play \cite{Fudenberg1998}: We do not need to estimate the utility function; instead, we allow the sub-systems to play the fictitious play and reach the corresponding Nash equilibrium in (\ref{eq:game}). 
\end{itemize}

\subsection{Parametric Learning: Quadratic Utilities}
We begin from the case in which the utility functions are assumed to be quadratic, given in (\ref{eq:LQG0}). Therefore, the private information of subsystem $n$ is the matrices $\mathbf{Q}_n$ and $\mathbf{R}_n$. 

\subsubsection{Message Space}
We first study the dimension of message space for the mechanism design. The following theorem shows that, in order to achieve the goal function, all the parameters (elements in the matrices $\left\{\mathbf{Q}_n\right\}_n$ and $\left\{\mathbf{R}_n\right\}_n$) need to be learned. The proof is given in Appendix \ref{appdx:proof_quadratic}.

\begin{theorem}\label{thm:full_quadratic}
For the case of quadratic utility functions, the minimum dimension of the message space is given by
\begin{eqnarray}
dim(M)=2Nd^2.
\end{eqnarray}
\end{theorem}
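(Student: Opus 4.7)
The plan is to split $\dim(M) = 2Nd^2$ into an upper and a lower bound, using a direct-revelation construction for the former and a Jacobian-rank (Mount--Reiter style) argument for the latter.

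For the upper bound, I would exhibit a mechanism in which each subsystem $n$ transmits the $d^2$ entries of $\mathbf{Q}_n$ and the $d^2$ entries of $\mathbf{R}_n$, giving a per-agent contribution of $2d^2$ scalars and a total message-space dimension of $2Nd^2$. Given all the matrices, the coordinator solves the joint optimization (\ref{eq:goal}) as a deterministic quadratic program and computes the resulting prices from the inverse-gradient relation (\ref{eq:incentive}); so this mechanism realizes the output functional $F_{\mathbf{x}}$, proving $\dim(M) \le 2Nd^2$.

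For the lower bound, I would invoke the standard fact that if $F = h\circ\mu$ factors through the message space, then $\dim(M)$ is at least the rank of $DF$ at a generic parameter point. Substituting (\ref{eq:linear_dynamics}) into (\ref{eq:LQG0}) and taking $\nabla_{\mathbf{u}_n}$, the FOC of (\ref{eq:goal}) becomes an affine system $\mathbf{M}(\mathbf{Q},\mathbf{R})\mathbf{u}^{*} = \mathbf{b}(\mathbf{Q},\mathbf{R},\mathbf{x})$ whose coefficients are linear in the entries of $(\mathbf{Q}_n,\mathbf{R}_n)$. Positive definiteness of $\mathbf{R}_n$ (and hence invertibility of the Hessian block) will let me perturb a single entry of $\mathbf{Q}_n$ or $\mathbf{R}_n$ and read off the induced change in $\mathbf{u}^{*}(\mathbf{x})$ as a function of the base state $\mathbf{x}$. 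The idea is then to show these $2Nd^2$ perturbations are linearly independent as maps $\mathbf{x}\mapsto\mathbf{u}^{*}(\mathbf{x})$: perturbations of $\mathbf{Q}_n$ produce terms that are quadratic (or affine) in $\mathbf{x}$, while perturbations of $\mathbf{R}_n$ produce terms that are independent of $\mathbf{x}$, and within each block freeing coordinates of $\mathbf{x}$ separates the individual entries. This gives $\mathrm{rank}(DF) = 2Nd^2$ and hence $\dim(M) \ge 2Nd^2$.

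The main obstacle will be the lower bound rather than the upper bound. The delicate parts are (i)~producing $2Nd^2$ genuinely independent directions in parameter space — the author appears to treat $\mathbf{Q}_n,\mathbf{R}_n$ as ambient $d\times d$ matrices, so I would adopt that convention rather than quotient by the symmetry that identifies $\mathbf{Q}$ with $(\mathbf{Q}+\mathbf{Q}^{T})/2$; (ii)~ensuring that the message decomposition $\mu=(\mu_1,\dots,\mu_N)$, in which $\mu_n$ sees only agent $n$'s private data, does not allow sharing dimensions across agents, which follows because the independent perturbations I construct are localized to a single agent's matrices; and (iii)~checking that the regulation function $\Psi$ does not introduce degeneracies, which can be arranged by working at a state $\mathbf{x}$ where the cross-terms from $\Psi$ and the Hessian blocks are nonsingular. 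Together these steps yield $\dim(M) = 2Nd^2$.
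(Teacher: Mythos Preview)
Your route and the paper's route are genuinely different, and the difference is not cosmetic.

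The paper works entirely at a \emph{fixed} state $\mathbf{x}$ (this theorem sits in the ``stationary case'' section, where $\mathbf{x}$ is explicitly held constant). Its lower bound is obtained by the Hurwicz--Reiter--Williams machinery: it forms the Bordered Mixed Hessian of the output price $p_1$ with respect to the private parameters $(A_1,A_2)$, differentiates the first-order conditions $A_n u_n-\partial P/\partial u_n=0$ implicitly to evaluate $\partial p_1/\partial A_1$ and $\partial p_1/\partial A_2$, and checks that neither vanishes. It then gives an alternative level-set argument: if the message dimension could be reduced, the level set $\mathcal{A}_2=\{\mathbf{A}_2: F(\mathbf{A}_1,\mathbf{A}_2)=(\mathbf{p}_1,\mathbf{p}_2)\}$ would have to be independent of $\mathbf{A}_1$, and this is shown to fail by exhibiting $\mathbf{A}_1'\neq\mathbf{A}_1$ that moves $\mathbf{u}_1$, hence $\mathbf{u}_2$, hence $\mathcal{A}_2$. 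Both arguments lean essentially on the \emph{privacy/product structure} of the message correspondence, not on the Jacobian rank of $F$.

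Your lower bound instead treats the output as the whole policy $\mathbf{x}\mapsto\mathbf{u}^\ast(\mathbf{x})$ and invokes $\dim(M)\ge\operatorname{rank}DF$. That is a clean idea, and it does yield $2Nd^2$ if one is allowed to vary $\mathbf{x}$; but it answers a slightly different question than the paper's. At a fixed $\mathbf{x}$ the target lives in $\mathbb{R}^{Nd}$, so the plain rank bound gives only $\dim(M)\ge Nd$; the jump to $2Nd^2$ in the stationary setting comes precisely from the decentralization constraint that your point~(ii) acknowledges but does not actually exploit---``the perturbations are localized to a single agent'' is not by itself enough (think of $F(\theta_1,\theta_2)=\theta_1+\theta_2$, where perturbations are localized yet a centralized one-dimensional message suffices). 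If you want your argument to match the paper's setting, you would need to replace the rank step with a genuine product-structure/level-set obstruction of the kind the paper uses; alternatively, you could state explicitly that you are bounding the dimension of a state-independent message space, in which case your rank argument goes through and is arguably more transparent than the BMH computation.
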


\subsubsection{Algorithm}

\subsection{Nonparametric Approach: Coordinated Fictitious Play}
Now, we consider the nonparametric utility functions. Although the utility functions are arbitrary with mild constraints (e.g., concavity and smoothness), it is not necessary to estimate the utility functions (or the corresponding vector fields) perfectly. One approach to avoid the explicit parameter estimation is the fictitious play, namely the subsystems play a game by setting the optimal response to the current actions of opponents. One expects the dynamics of fictitious play will converge to the Nash equilibrium (although not necessary) and thus achieve the optimal mechanism. Note that the subsystems do not interact with each other directly. Therefore, the game will be set by the coordinator adaptively. 

\subsubsection{Full Step Fictitious Play}
The coordinator sets the regulation term $\Psi$ as an extra reward for subsystem $n$, while fixing the system states of other subsystems. Then, at time $t$, subsystem plays a game with the following payoff
\begin{eqnarray}\label{eq:fictitious}
R_n(\mathbf{u}(t))&=&U_n(\mathbf{x}(t+1),\mathbf{u}(t))\nonumber\\
&+&\Psi(\mathbf{x}_n(t+1),\{\mathbf{x}_k(t)\}_{k\neq n}).
\end{eqnarray}

\begin{figure}
  \centering
  \includegraphics[scale=0.45]{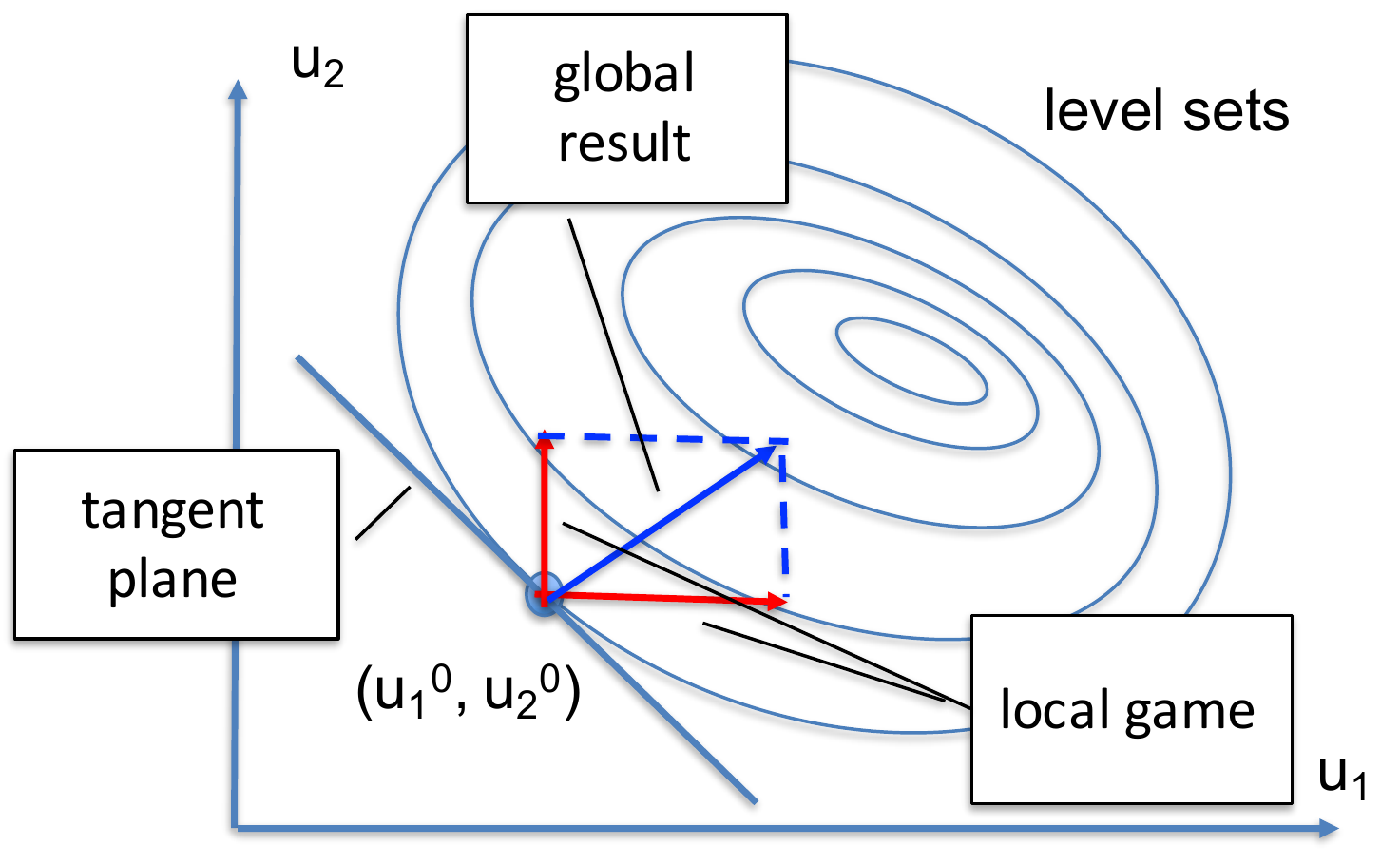}
  \caption{Illustration of the fictitious play and level sets.}\label{fig:level_set}
  \vspace{-0.1in}
\end{figure}

The action of system $n$, $\mathbf{u}_n^{t}$, is given by the solution of the following equation:
\begin{eqnarray}\label{eq:fictitious}
\nabla_{\mathbf{u}_n}U_n+\nabla_{\mathbf{u}_n}R(\cdot|\mathbf{u}_{-n}(t-1))=0,\qquad n=1,...,N.
\end{eqnarray}
If there are multiple solutions to (\ref{eq:fictitious}), we select the one closest to $\mathbf{u}_n^{t-1}$. 

The following theorem describes the convergence of fictitious play to the Nash equilibrium, thus the optimal mechanism. The proof is given in Appendix \ref{appdx:simultaneous}.
\begin{theorem}\label{thm:simultaneous}
Suppose that all Hessian matrices are positive definite. Moreover, the gradient is of order $\|\mathbf{u}\|$. When the derivatives higher than the second order are sufficiently small, and the partial derivatives $\frac{\partial^2 R}{\partial u_{ij}\partial u_{mn}}$ ($m\neq i$) are sufficiently small, the above fictitious play converges to the optimal point. 
\end{theorem}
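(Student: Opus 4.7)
The plan is to recast the fictitious play iteration as a fixed-point iteration $\mathbf{u}(t) = T(\mathbf{u}(t-1))$ defined implicitly by the coupled system in (\ref{eq:fictitious}), and then show that $T$ is a contraction in a suitable norm. A Nash equilibrium corresponds to a fixed point $\mathbf{u}^*=T(\mathbf{u}^*)$, so Banach's theorem will deliver convergence of the iterates to the optimal point.

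First, I would verify that $T$ is well-defined. For each fixed profile $\mathbf{u}_{-n}(t-1)$, the equation $\nabla_{\mathbf{u}_n}(U_n+R)=0$ determines $\mathbf{u}_n(t)$ as the unique stationary point of a strictly concave function, since the Hessian of $U_n+R$ with respect to $\mathbf{u}_n$ is positive definite by hypothesis. If higher-order terms produce multiple roots of the first-order condition, the explicit proximity-selection rule (choose the root closest to $\mathbf{u}_n^{t-1}$) fixes a canonical smooth branch, and the growth assumption $\|\nabla U_n\|=O(\|\mathbf{u}\|)$ rules out escape to infinity by giving a coercive response map.

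Second, I would compute the Jacobian of $T$ at a fixed point by implicit differentiation. Writing $F_n(\mathbf{u}_n(t),\mathbf{u}_{-n}(t-1))=\nabla_{\mathbf{u}_n}U_n+\nabla_{\mathbf{u}_n}R=0$ and differentiating yields
\begin{equation*}
\frac{\partial \mathbf{u}_n(t)}{\partial \mathbf{u}_m(t-1)} = -H_{nn}^{-1}H_{nm},\qquad m\neq n,
\end{equation*}
where $H_{nn}=\nabla^2_{\mathbf{u}_n\mathbf{u}_n}(U_n+R)$ and $H_{nm}=\partial^2 R/\partial\mathbf{u}_n\partial\mathbf{u}_m$. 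Uniform positive-definiteness of $H_{nn}$ bounds $\|H_{nn}^{-1}\|$ from above, while the assumed smallness of the cross-partials $\partial^2R/\partial u_{ij}\partial u_{mn}$ $(m\neq i)$ makes every off-diagonal block $H_{nm}$ small. Assembling the blocks into the global Jacobian $DT(\mathbf{u}^*)$ and taking operator norms gives $\|DT(\mathbf{u}^*)\|<1$ under the hypotheses.

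Third, I would upgrade this spectral estimate to a genuine contraction of $T$ (not just of its linearization) by invoking the assumption that derivatives of order strictly greater than two are uniformly small: a Taylor remainder argument then shows $\|T(\mathbf{u})-T(\mathbf{u}')\|\le \rho\|\mathbf{u}-\mathbf{u}'\|$ with $\rho<1$ on a ball containing the iteration trajectory. Existence of the fixed point itself follows from applying Brouwer to $T$ restricted to a large invariant ball (which the $O(\|\mathbf{u}\|)$ growth makes available), and uniqueness follows from the contraction. Convergence of $\mathbf{u}(t)$ to $\mathbf{u}^*$ then holds geometrically, and optimality of $\mathbf{u}^*$ with respect to the social welfare objective follows because, at the fixed point, the first-order conditions (\ref{eq:fictitious}) coincide with the stationarity conditions of the problem in (\ref{eq:goal}).

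The main obstacle will be the passage from a local spectral bound on $DT$ at the equilibrium to a basin of attraction that actually contains the initial iterate. The block structure of $DT$ makes it easy to bound spectrally when the cross-partials are small, but the nonlinear remainder has to be dominated uniformly along the trajectory, which is where the three smallness hypotheses (positive definite diagonal Hessians, small cross-partials, small higher-order derivatives) must be balanced quantitatively against the gradient growth rate to keep the iterates inside the contraction ball.
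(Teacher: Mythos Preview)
Your proposal is correct but follows a genuinely different route from the paper. The paper's own argument is a Lyapunov-type monotonicity proof: it first freezes the higher-order terms, expands the local increment $U_n(\mathbf{u}_n^{t})+R(\mathbf{u}_n^{t},\mathbf{u}_{-n}^{t-1})-U_n(\mathbf{u}_n^{t-1})-R(\mathbf{u}_n^{t-1},\mathbf{u}_{-n}^{t-1})$ to second order, identifies the per-player best response as $\delta\mathbf{u}_n=-\tfrac{1}{2}\mathbf{H}_n^{-1}\mathbf{a}_n$, and then sums over $n$ to obtain the change in the \emph{social welfare} $W(\mathbf{u}^t)-W(\mathbf{u}^{t-1})=\tfrac{1}{2}\sum_n\delta\mathbf{u}_n^T\mathbf{H}_n\delta\mathbf{u}_n+\sum_{n\neq m}\delta\mathbf{u}_n^T\mathbf{H}_{nm}\delta\mathbf{u}_m$. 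Smallness of the cross-blocks $\mathbf{H}_{nm}$ forces this increment to be positive; smallness of the higher-order derivatives then keeps it positive in the nonlinear regime; boundedness of $W$ gives convergence. Your contraction-mapping route uses the same hypotheses in a different place: the positive-definite diagonal blocks bound $\|H_{nn}^{-1}\|$, the small cross-partials bound $\|H_{nm}\|$, and together they make $\|DT\|<1$, with the higher-order smallness promoting this to a global Lipschitz bound. Your approach buys a geometric rate and a clean uniqueness statement from Banach; the paper's approach buys a more direct link to optimality, since it tracks $W$ itself rather than having to argue separately that the fixed point of $T$ satisfies the stationarity conditions of (\ref{eq:goal}). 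Both are sound under the stated hypotheses.
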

\begin{remark}
The condition in the theorem means that the coupling of the sub-systems are weak. The cost functions of each subsystem is close to a quadratic function. Therefore, the level sets are more similar to spheres. The local optimizations result in directions parallel to the subspaces of the subsystems. Fig. \ref{fig:level_set} shows that the sum of the individual directions can still improve the objective function value. 
\end{remark}

Instead of the above simultaneous update, the coordinator can also control the games such that the subsystems update their actions in turns. At time $t$, only subsystem $n=mod(t,N)+1$ updates its action:
\begin{eqnarray}\label{eq:fictitious}
\nabla_{\mathbf{u}_n}U_n+\nabla_{\mathbf{u}_n}R(\cdot|\mathbf{u}_{-n}(t-1))=0.
\end{eqnarray}

\subsubsection{Incremental Fictitious Play}
Simulation shows that full step size and simultaneous fictitious play may not converge, which is a common phenomenon of fictitious play []. Therefore, it is desirable to carry out the fictitious play simultaneously. We follow the framework of variational inequality, which is defined as follows: Given a subset $K$ of $\mathbb{R}^n$ and a mapping $F:K\rightarrow \mathbb{R}^n$, find an $\mathbf{x}\in K$ such that
\begin{eqnarray}\label{eq:VI}
(\mathbf{y}-\mathbf{x})^TF(\mathbf{x})\geq 0,\qquad \forall \mathbf{y}\in K.
\end{eqnarray}
If we consider $F$ as the vector field $(-\nabla_{\mathbf{u}_1}(U_1+\Psi),...,-\nabla_{\mathbf{u}_N}(U_N+\Psi))\in \mathbb{R}^{Nd}$, then $\mathbf{x}$ satisfying (\ref{eq:VI}) is the solution to the Nash equilibrium (Section 1.4.2. in \cite{Facchinei2003}). In \cite{Facchinei2003}, the following simple projection iteration is proposed:
\begin{eqnarray}\label{eq:projection}
\mathbf{x}^{k}=\Pi_{K}(\mathbf{x}^k+\tau_kF(\mathbf{x}^k)).
\end{eqnarray}
The iterative projection is shown to converge to the solution when $F$ is monotone, namely
\begin{eqnarray}
\left(F(\mathbf{x})-F(\mathbf{y})\right)^T\left(\mathbf{x}-\mathbf{y}\right)\geq c\|F(\mathbf{x})-F(\mathbf{y}\|^2,
\end{eqnarray}
where $c$ is a positive constant, and the step $\tau_k$ is set 
\begin{eqnarray}
0<\inf_k \tau_k\leq \sup_k\tau_k<2c.
\end{eqnarray}

Following the framework of variational inequality, we consider 
\begin{eqnarray}
F=(\nabla_{\mathbf{u}_1}R_1(\mathbf{u}),...,\nabla_{\mathbf{u}_N}R_N(\mathbf{u})),
\end{eqnarray}
where $R_n(\mathbf{u})=U_n(\mathbf{u}_n)+\Psi(\mathbf{u})$.

However, the incremental update, similar to the gradient descent approach, cannot be directly applied to the fictitious play, since incentives are needed to make the subsystems update their actions incrementally due to the myopic assumption. Moreover, in the variational inequality framework, the vector field $F$ is known, while the corresponding vector field is only partially known in our context (the gradient of $\Psi$ is known, while that of $U_n$ is unknown). This can be implemented in the design of games. The following two approaches are proposed in this paper:
\begin{itemize}
\item Two-stage play: In the $k$-th round, two stages of games are carried out. In the first stage, the game reward for each subsystem $n$ is set to
\begin{eqnarray}\label{eq:reward1}
R_n(\mathbf{u}_n)&=&U_n(\mathbf{u}_n)+\Psi(\mathbf{u}_n,\mathbf{u}_{-n}^{k-1})\nonumber\\
&-&\lambda_k\|\mathbf{u}_n-\mathbf{u}_n^{k-1}\|^2.
\end{eqnarray}
Each subsystem $n$ takes action $\hat{\mathbf{u}}_n$ by solving
\begin{eqnarray}\label{eq:stage1}
&&\nabla_{\mathbf{u}_n} U_n(\mathbf{u}_n^k)+\nabla_{\mathbf{u}_n}\Psi(\mathbf{u}_n^k,\mathbf{u}_{-n}^{k-1})\nonumber\\
&-&\lambda_k(\mathbf{u}_n^k-\mathbf{u}_n^{k-1})=0.
\end{eqnarray}
Then, in the same stage, the reward is set to
\begin{eqnarray}
R_n(\mathbf{u}_n)&=&\|\mathbf{u}_n\|^2-\mathbf{u}_n^T\mathbf{u}_n^{k-1}\nonumber\\
&-&2\gamma_k\mathbf{u}_n^T\nabla_{\mathbf{u}_n}\Psi(\hat{\mathbf{u}}_n^k,\mathbf{u}_{-n}^{k-1})\nonumber\\
&+&2\gamma_k\lambda_k\mathbf{u}_n^T(\mathbf{u}_n-\mathbf{u}_n^{k-1})\nonumber\\
&+&2\gamma_k\mathbf{u}_n^T\nabla_{\mathbf{u}_n}\Psi(\hat{\mathbf{u}}^k),
\end{eqnarray}
which results in 
\begin{eqnarray}
\mathbf{u}_n^k&=&\hat{\mathbf{u}}_n^{k-1}-\gamma_k\nabla_{\mathbf{u}_n}\Psi(\hat{\mathbf{u}}_n^k,\mathbf{u}_{-n}^{k-1})\nonumber\\
&+&\lambda_k(\mathbf{u}_n-\mathbf{u}_n^{k-1})+\gamma_k\nabla_{\mathbf{u}_n}\Psi(\hat{\mathbf{u}}^k)\nonumber\\
&=&\hat{\mathbf{u}}_n^{k-1}+\gamma_k\nabla_{\mathbf{u}_n}U_n(\hat{\mathbf{u}}_n^k)+\gamma_k\nabla_{\mathbf{u}_n}\Psi(\hat{\mathbf{u}}^k),
\end{eqnarray}
where the last equality is due to (\ref{eq:stage1}).

\item Single-stage play: Before the operation, initialize $\tilde{\mathbf{u}^0}$. In the $k$-th round, we set
\begin{eqnarray}\label{eq:probe}
R_n(\mathbf{u}_n)=U_n(\mathbf{u}_n)+\Psi(\mathbf{u}_n,\tilde{\mathbf{u}}_{-n}^{k-1})-\lambda_k\|\mathbf{u}_n-\mathbf{u}_n^{k-1}\|^2.
\end{eqnarray}
Given the responses $\{\mathbf{u}_n^k\}_{n=1,...,N}$, then the coordinator calculates
\begin{eqnarray}
\tilde{\mathbf{u}}^{k}&=&{\mathbf{u}}^{k-1}-\gamma_k \nabla_{\mathbf{u}_n}\Psi(\mathbf{u}_n^k,\tilde{\mathbf{u}}_{-n}^{k-1})\nonumber\\
&+&\lambda_k(\mathbf{u}_n-\mathbf{u}_n^{k-1})+\gamma_k\nabla_{\mathbf{u}_n}\Psi(\mathbf{u}^{k})\nonumber\\
&=&{\mathbf{u}}^{k-1}+\gamma_k \nabla_{\mathbf{u}_n}U_n(\mathbf{u}_n^{k})+\gamma_k\nabla_{\mathbf{u}_n}\Psi(\mathbf{u}^{k}).
\end{eqnarray}
where the last step is due to (\ref{eq:probe}).
\end{itemize}
\begin{remark}
Some rationales in the algorithms are given as follows:
\begin{itemize}
\item The term $\lambda_k\|\mathbf{u}_n-\mathbf{u}_n^{k-1}\|^2$ in (\ref{eq:reward1}) and (\ref{eq:probe}) is to assure that the new point $\hat{\mathbf{u}}^k$ and ${\mathbf{u}}^k$ not too far away from ${\mathbf{u}}^{k-1}$ and $\tilde{\mathbf{u}}^{k-1}$, respectively. This is similar to the Tikhonov regularization and the proximal point algorithms.

\item The action optimizations in (\ref{eq:reward1}) and (\ref{eq:probe}) are to evaluate the vector field at a given $\mathbf{u}$, by leveraging the known expression of $\Psi$, thus facilitating the variational inequality approach in (\ref{eq:projection}).
\end{itemize}
\end{remark}

For the two-stage approach, we define $\mathbf{e}_n(\mathbf{u})$ as
\begin{eqnarray}
\mathbf{e}_n(\mathbf{u}_n)=\nabla_{\mathbf{u}_n} R_n (\mathbf{\hat{u}})-\nabla_{\mathbf{u}_n} R_n (\mathbf{{u}}),
\end{eqnarray}
where $\hat{\mathbf{u}}_n$ is obtained from $\mathbf{u}_n$ according to (\ref{eq:stage1}), namely
 \begin{eqnarray}\label{eq:stage1}
&&\nabla_{\mathbf{u}_n} U_n(\hat{\mathbf{u}}_n)+\nabla_{\mathbf{u}_n}\Psi(\hat{\mathbf{u}}_n,\mathbf{u}_{-n})\nonumber\\
&-&\lambda_k(\hat{\mathbf{u}}_n-\mathbf{u}_n)=0.
\end{eqnarray}

And we need the following definition.
\begin{definition}
We say that a function $F:\mathbb{R}^n\rightarrow \mathbb{R}^n$ is co-coercive with constant $c>0$, if
\begin{eqnarray}
(F(\mathbf{x})-F(\mathbf{y}))^T(\mathbf{x}-\mathbf{y})\geq c\|F(\mathbf{x})-F(\mathbf{y})\|_2^2.
\end{eqnarray}
\end{definition}

\begin{theorem}
For the two-stage approach, suppose that both the functions $F$ and $\mathbf{e}$ are co-coercive with constants $c_1$ and $c_2$, respectively. If 
\begin{eqnarray}
0<\inf_k \tau_k<\sup_k\tau_k<2\min(c_1,c_2),
\end{eqnarray}
the two-stage algorithm converges to the Nash equilibrium.
\end{theorem}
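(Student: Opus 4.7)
The plan is to view the two-stage update as a perturbation of the projection iteration (\ref{eq:projection}) and to run a Fejer-type Lyapunov analysis on the squared distance to the Nash equilibrium. Let $\mathbf{u}^{*}$ be the Nash equilibrium, characterized by $F(\mathbf{u}^{*})=0$ with $F=(\nabla_{\mathbf{u}_{1}}R_{1},\ldots,\nabla_{\mathbf{u}_{N}}R_{N})$. By the closed-form identity derived right after (\ref{eq:stage1}) for the $k$-th iterate, the two-stage update can be rewritten as
$$\mathbf{u}^{k}=\mathbf{u}^{k-1}+\tau_{k}F(\hat{\mathbf{u}}^{k})=\mathbf{u}^{k-1}+\tau_{k}\bigl(F(\mathbf{u}^{k-1})+\mathbf{e}(\mathbf{u}^{k-1})\bigr),$$
with $\tau_{k}=\gamma_{k}$, by the very definition of $\mathbf{e}$. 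This is precisely (\ref{eq:projection}) (with $K=\mathbb{R}^{Nd}$, so that $\Pi_{K}$ is the identity) driven by the effective field $F+\mathbf{e}$ rather than $F$ alone, and it is on this decomposition that co-coercivity of each piece is subsequently leveraged.

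First I would set $V_{k}=\|\mathbf{u}^{k}-\mathbf{u}^{*}\|^{2}$ and expand
$$V_{k}=V_{k-1}+2\tau_{k}\bigl\langle F(\mathbf{u}^{k-1})+\mathbf{e}(\mathbf{u}^{k-1}),\,\mathbf{u}^{k-1}-\mathbf{u}^{*}\bigr\rangle+\tau_{k}^{2}\bigl\|F(\mathbf{u}^{k-1})+\mathbf{e}(\mathbf{u}^{k-1})\bigr\|^{2}.$$
The next step is to bound each piece by its co-coercive energy: (i) co-coercivity of $F$ with constant $c_{1}$ at $(\mathbf{u}^{k-1},\mathbf{u}^{*})$, together with $F(\mathbf{u}^{*})=0$, controls the $F$-part of the cross term by $c_{1}\|F(\mathbf{u}^{k-1})\|^{2}$; (ii) the same reasoning for $\mathbf{e}$ with constant $c_{2}$, provided one first verifies that $\mathbf{e}(\mathbf{u}^{*})=0$, controls the $\mathbf{e}$-part by $c_{2}\|\mathbf{e}(\mathbf{u}^{k-1})\|^{2}$; (iii) a weighted Young inequality on $\|F+\mathbf{e}\|^{2}$ splits the quadratic $\tau_{k}^{2}$ term into an $\|F\|^{2}$ piece and an $\|\mathbf{e}\|^{2}$ piece. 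Collecting signs after reconciling the sign convention of (\ref{eq:projection}) with our ascent update, this should yield a Lyapunov drop of the schematic form
$$V_{k}\leq V_{k-1}-\tau_{k}(2c_{1}-\tau_{k})\|F(\mathbf{u}^{k-1})\|^{2}-\tau_{k}(2c_{2}-\tau_{k})\|\mathbf{e}(\mathbf{u}^{k-1})\|^{2}.$$
The hypothesis $\sup_{k}\tau_{k}<2\min(c_{1},c_{2})$ makes both coefficients uniformly positive, so $\{V_{k}\}$ is nonincreasing and the telescoped sums $\sum_{k}\|F(\mathbf{u}^{k-1})\|^{2}$ and $\sum_{k}\|\mathbf{e}(\mathbf{u}^{k-1})\|^{2}$ are finite. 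Combined with continuity of $F$, this forces every limit point of $\{\mathbf{u}^{k}\}$ to be a zero of $F$, that is, a Nash equilibrium, and Fejer monotonicity of $V_{k}$ then pins down the limit.

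The main obstacle, in my estimation, will be step (ii): the hypothesis gives co-coercivity of $\mathbf{e}$ as an abstract inequality, but to obtain the clean bound $\langle\mathbf{e}(\mathbf{u}^{k-1}),\mathbf{u}^{k-1}-\mathbf{u}^{*}\rangle\geq c_{2}\|\mathbf{e}(\mathbf{u}^{k-1})\|^{2}$ one must establish that $\mathbf{u}^{*}$ is a fixed point of the first-stage map, i.e.\ $\hat{\mathbf{u}}^{*}=\mathbf{u}^{*}$. This follows from the Nash-equilibrium relation $\nabla_{\mathbf{u}_{n}}U_{n}(\mathbf{u}^{*})+\nabla_{\mathbf{u}_{n}}\Psi(\mathbf{u}^{*})=0$ and the vanishing of the proximal term in (\ref{eq:stage1}) at $\hat{\mathbf{u}}=\mathbf{u}^{*}$, but the argument requires uniqueness of the implicit solution to (\ref{eq:stage1}), which in turn needs $\lambda_{k}$ chosen large enough to render the first-stage problem strongly monotone. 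A secondary technical point is that the naive bound $\|F+\mathbf{e}\|^{2}\leq 2(\|F\|^{2}+\|\mathbf{e}\|^{2})$ yields only $\tau_{k}<\min(c_{1},c_{2})$; attaining the stated threshold $2\min(c_{1},c_{2})$ will require the weighted Young splitting mentioned in (iii), with the weight tuned adaptively to the ratio $c_{1}/c_{2}$.
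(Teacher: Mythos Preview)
Your approach is correct and close in spirit to the paper's, but the packaging differs. The paper does not run the Fejer analysis from scratch; instead it shows in one stroke that the \emph{composite} field $\tilde{F}=F+\mathbf{e}$ is itself co-coercive with constant $\min(c_{1},c_{2})$, by splitting $(\tilde{F}(\mathbf{u})-\tilde{F}(\mathbf{v}))^{T}(\mathbf{u}-\mathbf{v})$ into the $F$-part and the $\mathbf{e}$-part and bounding $\|\tilde{F}(\mathbf{u})-\tilde{F}(\mathbf{v})\|^{2}$ by the sum of the two pieces, then simply invokes Lemma~12.1.5 and Theorem~12.1.8 of Facchinei--Pang \cite{Facchinei2003} for the projection iteration with a co-coercive operator. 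Your direct Lyapunov computation is essentially an inlined version of that black-box result, specialized to the pair $(\mathbf{u}^{k-1},\mathbf{u}^{*})$. The modular route buys brevity and avoids your obstacle (ii): since the paper works with arbitrary pairs $(\mathbf{u},\mathbf{v})$ rather than anchoring at $\mathbf{u}^{*}$, it never needs to isolate the statement $\mathbf{e}(\mathbf{u}^{*})=0$ as a separate verification (though the cited lemma does require that the zeros of $\tilde{F}$ coincide with the Nash set, which amounts to the same thing). Conversely, your hands-on route is more transparent about where the step-size threshold comes from, and your candid remark that the naive splitting $\|F+\mathbf{e}\|^{2}\leq 2(\|F\|^{2}+\|\mathbf{e}\|^{2})$ only delivers $\tau_{k}<\min(c_{1},c_{2})$ is well taken---the paper's corresponding inequality $\|\tilde{F}(\mathbf{u})-\tilde{F}(\mathbf{v})\|^{2}\leq\|F(\mathbf{u})-F(\mathbf{v})\|^{2}+\|\mathbf{e}(\mathbf{u})-\mathbf{e}(\mathbf{v})\|^{2}$ is asserted without the factor of $2$, so the sharper constant in the theorem statement is not fully justified there either.
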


We also notice that we need to select a large $\lambda_k$, such that $\mathbf{e}(\mathbf{u})$ is sufficiently small. The following lemma shows a bound on the gap of $\hat{\mathbf{u}}$ and $\mathbf{u}$ as a function of $\lambda_k$.
\begin{lemma}
For the solution to Eq. (\ref{eq:stage1}), we have
\begin{eqnarray}
\|\hat{\mathbf{u}}^{k}-\mathbf{u}^{k-1}\|\leq \frac{ND}{\lambda_k},
\end{eqnarray}
where
\begin{eqnarray}
D=\max_{n}\sup_{\mathbf{u}_n}(\nabla_{\mathbf{u}_n}U_n(\mathbf{u}_n)+\nabla_{\mathbf{u}_n}\Phi(\mathbf{u})).
\end{eqnarray}
\end{lemma}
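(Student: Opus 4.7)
The plan is to read off the per-subsystem displacement directly from the first-order optimality condition (\ref{eq:stage1}) and then aggregate across subsystems. The key observation is that (\ref{eq:stage1}), evaluated at $\mathbf{u}_n = \mathbf{u}_n^{k-1}$, can be rearranged to express the proximal step $\hat{\mathbf{u}}_n^k - \mathbf{u}_n^{k-1}$ explicitly as $\lambda_k^{-1}$ times the sum of the utility gradient and the regulation-function gradient evaluated at the new point; this algebraic identity is the entire engine of the proof.

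First, I would isolate $\hat{\mathbf{u}}_n^k - \mathbf{u}_n^{k-1}$ in the optimality condition to obtain
\[
\hat{\mathbf{u}}_n^k - \mathbf{u}_n^{k-1} \;=\; \frac{1}{\lambda_k}\bigl[\nabla_{\mathbf{u}_n} U_n(\hat{\mathbf{u}}_n^k) + \nabla_{\mathbf{u}_n}\Psi(\hat{\mathbf{u}}_n^k, \mathbf{u}_{-n}^{k-1})\bigr].
\]
Taking Euclidean norms of both sides and invoking the definition of $D$ immediately yields the per-subsystem bound $\|\hat{\mathbf{u}}_n^k - \mathbf{u}_n^{k-1}\| \le D/\lambda_k$. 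Second, I would concatenate these $N$ inequalities into a single bound on the stacked iterate: writing the global displacement as the block vector whose $n$-th block is $\hat{\mathbf{u}}_n^k - \mathbf{u}_n^{k-1}$ and applying the block-wise triangle inequality $\|\mathbf{v}\|_2 \le \sum_n \|\mathbf{v}_n\|_2$ gives the claimed bound $ND/\lambda_k$. One could instead invoke the identity $\|\mathbf{v}\|_2^2 = \sum_n \|\mathbf{v}_n\|_2^2$ to sharpen this to $\sqrt{N}\,D/\lambda_k$, but the looser form stated in the lemma is what is needed downstream.

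The main (and essentially only) subtlety is ensuring that $D$ is well-defined and finite, i.e., that the supremum of the relevant gradient combination over $\mathbf{u}_n$ is bounded. This is an implicit hypothesis on the problem data: if either $\nabla_{\mathbf{u}_n} U_n$ or $\nabla_{\mathbf{u}_n}\Psi$ grows unboundedly on the feasible set, then $D=\infty$ and the bound becomes vacuous. I would therefore flag a brief remark noting that the assertion presupposes uniformly bounded gradients on the action domain, which is automatic whenever the action space is compact or $U_n$ and $\Psi$ are globally Lipschitz. Beyond this, the proof requires no further machinery---no fixed-point theorem, no monotonicity argument, only rearrangement and the triangle inequality---although the existence and uniqueness of $\hat{\mathbf{u}}_n^k$ as a solution of (\ref{eq:stage1}) does rely on the strong concavity contributed by the proximal term $\lambda_k\|\mathbf{u}_n - \mathbf{u}_n^{k-1}\|^2$ in (\ref{eq:reward1}).
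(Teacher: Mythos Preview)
Your proposal is correct and matches the paper's proof essentially step for step: rearrange the optimality condition (\ref{eq:stage1}) to isolate $\hat{\mathbf{u}}_n^k-\mathbf{u}_n^{k-1}$, take norms to obtain the per-subsystem bound $D/\lambda_k$, and then stack over $n=1,\dots,N$. The paper presents these same three moves with less commentary; your added remarks on the finiteness of $D$ and the sharper $\sqrt{N}$ constant are reasonable embellishments but not part of the original argument.
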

\begin{proof}
The proof is straightforward. From (\ref{eq:stage1}), we have 
\begin{eqnarray}
\lambda_k(\mathbf{u}_n^k-\mathbf{u}_n^{k-1})=\nabla_{\mathbf{u}_n} U_n(\mathbf{u}_n^k)+\nabla_{\mathbf{u}_n}\Psi(\mathbf{u}_n^k,\mathbf{u}_{-n}^{k-1}),
\end{eqnarray}
which results in 
\begin{eqnarray}
\|\mathbf{u}_n-\mathbf{u}_n^{k-1}\|&=&\frac{1}{\lambda_k}\left\|\nabla_{\mathbf{u}_n} U_n(\mathbf{u}_n^k)+\nabla_{\mathbf{u}_n}\Psi(\mathbf{u}_n^k,\mathbf{u}_{-n}^{k-1})\right\|\nonumber\\
&\leq &\frac{D}{\lambda_k}.
\end{eqnarray}
This concludes the proof by stacking the inequalities for $n=1,...,N$.
\end{proof}

A simpler approach for the incremental fictitious play is to employ the Tikhonov regularization term directly:
\begin{eqnarray}\label{eq:probe2}
\mathbf{u}_{n}^k&=&\arg\max_{\mathbf{u}}\left(U_n(\mathbf{u}_n)+\Psi(\mathbf{u}_n,\tilde{\mathbf{u}}_{-n}^{k-1})\right.\nonumber\\
&-&\left.\lambda_k\|\mathbf{u}_n-\mathbf{u}_n^{k-1}\|^2\right).
\end{eqnarray}

This is related to the projection based variational inequality approach. The optimal solution $\mathbf{u}_n^{k}$ satisfies the following equation:
\begin{eqnarray}
\lambda_k(\mathbf{u}_n-\mathbf{u}_n^{k-1})+\nabla R(\mathbf{u}_n,\mathbf{u}_{n-}^{k-1})=0.
\end{eqnarray}
When $\lambda_k$ is sufficiently large, the solution $\mathbf{x}_n^k$ should be sufficiently close to $\mathbf{x}_n^{k-1}$. Therefore, we can expand $\nabla R$ as
\begin{eqnarray}
\nabla R(\mathbf{u}_n,\mathbf{u}_{n-}^{k-1})&=&\nabla R(\mathbf{u}_n^{k-1},\mathbf{u}_{n-}^{k-1})+H_R(\mathbf{u}_n-\mathbf{u}_n^k)\nonumber\\&+&o(\|\mathbf{u}_n-\mathbf{u}_n^k\|),
\end{eqnarray}
where $H_R$ is the Hessian matrix of $R$. Therefore, the update is 
\begin{eqnarray}
\mathbf{u}_n&=&\mathbf{u}_n^{k-1}+\frac{1}{\lambda_k}\nabla R(\mathbf{u}_n^{k-1},\mathbf{u}_{n-}^{k-1})+\frac{1}{\lambda_k}H_R(\mathbf{u}_n-\mathbf{u}_n^k)\nonumber\\
&+&o(\|\mathbf{u}_n-\mathbf{u}_n^k\|)\nonumber\\
&=&\mathbf{u}_n^{k-1}+\frac{1}{\lambda_k}\nabla R(\mathbf{u}_n^{k-1},\mathbf{u}_{n-}^{k-1})+O\left(\frac{1}{\lambda_k}\|\mathbf{u}_n-\mathbf{u}_n^k\|)\right),
\end{eqnarray}
which is a perturbed version of the projection based variational inequality approach.

\section{Dynamic Mechanism Learning}
In this section, we consider the case in which the system states of subsystems change quickly; therefore, there is no time for the polling procedure in Section \ref{sec:stationary}. The main challenge is how to leverage the samples obtained in previous operation, although the system state has been changed; otherwise, there is no sample to exploit for the current decision. Hence, it is of key importance to find connections between the current environment and the history. In this section, we consider two approaches for this connection, namely the unchanged parameters and geometric connection in the vector fields.

\subsection{Parametric Approach: Quadratic Utilities}
Similarly to Section \ref{sec:stationary}, we assume that the utility functions are quadratic. The constant parameters play the role of connection between the current environment and the history, thus providing samples from the history for the current decision.

For computing the gradient, we need to use the fact $\nabla_\mathbf{x}\mathbf{x}^T\mathbf{A}\mathbf{x}=(\mathbf{A}+\mathbf{A}^T)\mathbf{x}$ and $\nabla_\mathbf{x}(f\circ g)(\mathbf{x})=(Dg(\mathbf{c}))^T(\nabla_{\mathbf{y}} f(\mathbf{y}))$, where $D$ is the Jacobian matrix.
Then, for agent $n$, we have
\begin{eqnarray}
\nabla_{\mathbf{u}_n}U_n(\mathbf{x}_n(t+1),\mathbf{u}_n)&=&-2\mathbf{B}_n^T\mathbf{Q}_n(\mathbf{x}_n(t+1)-\mathbf{x}_n^0)\nonumber\\
&-&2\mathbf{R}_n\mathbf{u}_n(t),
\end{eqnarray}
which implies that the optimal action $\mathbf{u}_n$ satisfies
\begin{eqnarray}
\mathbf{p}_n&=&2\mathbf{B}_n^T\mathbf{Q}_n\mathbf{A}_{nn}\mathbf{x}_n(t)+2\left(\mathbf{B}_n^T\mathbf{Q}_n\mathbf{B}_n+\mathbf{R}_n\right)\mathbf{u}_n(t)\nonumber\\
&-&2\mathbf{B}_n^T\mathbf{Q}_n\mathbf{x}_n^0
\end{eqnarray}
Therefore, we have
\begin{eqnarray}\label{eq:CD}
\mathbf{C}_n\mathbf{x}_n(t)+\mathbf{D}_n\mathbf{u}_n(t)=\mathbf{y}_n(t).
\end{eqnarray}
where 
\begin{eqnarray}
\left\{
\begin{array}{lll}
&\mathbf{C}_n=2\mathbf{B}_n^T\mathbf{Q}_n\mathbf{A}_{nn}\\
&\mathbf{D}_n=2\left(\mathbf{B}_n^T\mathbf{Q}_n\mathbf{B}_n+\mathbf{R}_n\right)\\
&\mathbf{y}_n=\mathbf{p}_n+2\mathbf{B}_n^T\mathbf{Q}_n\mathbf{x}_n^0
\end{array}
\right..
\end{eqnarray}
We can rewrite (\ref{eq:CD}) as
\begin{eqnarray}\label{eq:CD0}
\mathbf{E}_n\mathbf{z}_n(t)=\mathbf{f}_n(t),
\end{eqnarray}
where $\mathbf{E}_n=(\mathbf{C}_n,\mathbf{D}_n)$ and $\mathbf{z}_n(t)=(\mathbf{x}_n^T(t),\mathbf{u}_n^T(t))^T$. Stacking the observations in $M$ rounds to one equation, we have
\begin{eqnarray}\label{eq:CD1}
\mathbf{E}_n\mathbf{Z}_n=\mathbf{F}_n,
\end{eqnarray}
where $\mathbf{Z}_n=(\mathbf{z}_n(1),...,\mathbf{z}_n(M))$ and $\mathbf{F}_n=(\mathbf{f}_n(1),...,\mathbf{f}_n(M))$. We further rewrite (\ref{eq:CD1}) as
\begin{eqnarray}
(\mathbf{I}\otimes\mathbf{Z}_n)vec(\mathbf{E}_n)=vec(\mathbf{F}_n),
\end{eqnarray}
which results in
\begin{eqnarray}
vec(\mathbf{E}_n)=(\mathbf{I}\otimes\mathbf{Z}_n)^{-1}vec(\mathbf{F}_n).
\end{eqnarray}
Once $\mathbf{E}_n$ is obtained, we obtain $\mathbf{C}_n$ and $\mathbf{D}_n$ immediately. The parameters $\mathbf{Q}_n$ and $\mathbf{R}_n$ are obtained as follows:
\begin{eqnarray}
\left\{
\begin{array}{ll}
&\mathbf{Q}_n=\frac{1}{2}\mathbf{B}_n^{-1}\mathbf{C}_n\\
&\mathbf{R}_n=\mathbf{D}_n-\mathbf{C}_n
\end{array}
\right..
\end{eqnarray}
We also obtain the mapping from the price to the individually optimal control action:
\begin{eqnarray}
\mathbf{u}_n(t)=\mathbf{D}_n^{-1}\mathbf{y}_n(t)-\mathbf{D}_n^{-1}\mathbf{C}_n\mathbf{x}_n(t).
\end{eqnarray}

Once learning the response of subsystems given the prices, we can optimize the prices in order to maximize the social welfare. We first assume that the penalty term $\Psi$ in (\ref{eq:social_welfare}) is zero. Then, in the social welfare in (\ref{eq:social_welfare}), the term related to the control action $\mathbf{u}_n$ is given by
\begin{eqnarray}
W_n&=&-(\mathbf{x}_n(t+1)-\mathbf{x}_0)^T\mathbf{Q}_n\mathbf{A}_{nn}(\mathbf{x}_n(t+1)-\mathbf{x}_0)\nonumber\\
&-&\mathbf{u}_n^T\mathbf{R}_n\mathbf{u}_n+\Psi(\mathbf{x}(t+1))
\end{eqnarray}
Therefore, we have
\begin{eqnarray}
\nabla_{\mathbf{u}_n}W_n&=&-2\mathbf{B}_n^T\mathbf{Q}_n(\mathbf{x}_n(t+1)-\mathbf{x}_n^0)\nonumber\\
&-&2\mathbf{R}_n\mathbf{u}_n(t)+\mathbf{A}_{nn}^2\nabla_{\mathbf{x}_n}\Psi.
\end{eqnarray}
Suppose that $\mathbf{u}^*$ is the solution to the equation $\nabla_{\mathbf{u}_n}W_n=0$. Then the optimal price $\mathbf{p}_n$ should be set to
\begin{eqnarray}
\mathbf{p}_n^*(t)=\mathbf{C}_n\mathbf{x}_n(t)+\mathbf{D}_n\mathbf{u}_n^*(t)-2\mathbf{B}_n^T\mathbf{Q}_n\mathbf{x}_n^0
\end{eqnarray}

\subsection{Nonparametric Approach: Fictitious Play}

\subsection{Nonparametric Approach: Connection based Fictitious Play}

\begin{figure}
  \centering
  \includegraphics[scale=0.55]{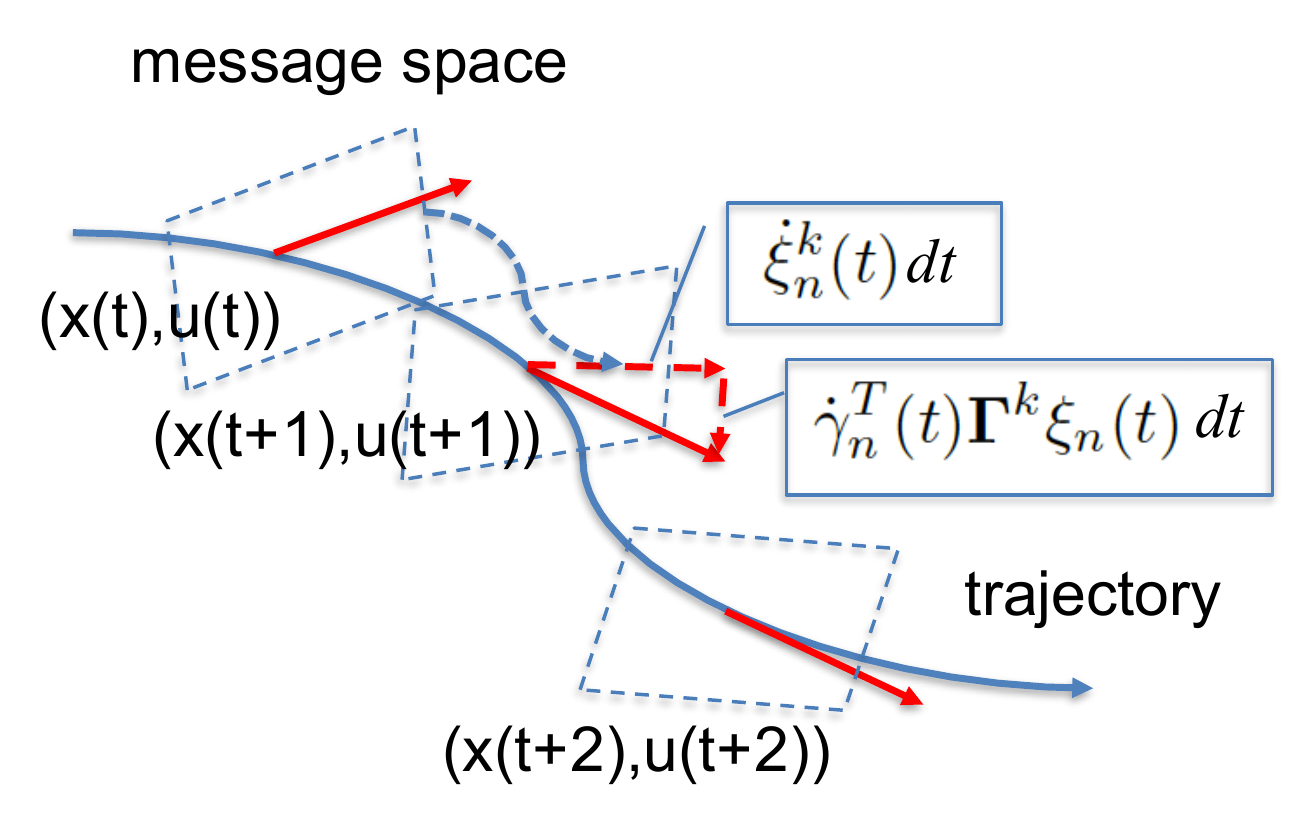}
  \caption{An illustration of trajectory.}\label{fig:trajectory}
  \vspace{-0.1in}
\end{figure}

We denote by $\gamma(t)$ the trajectory of $\mathbf{z}$ as a function of time $t$. Then, when $\mathbf{A}_{nm}$ are sufficiently close to $\mathbf{I}$ and $\mathbf{B}$ is sufficiently small, we can approximate the discrete time system as a continuous time one. We assume that the system of $\xi_n$ is autonomous, which evolves as
\begin{eqnarray}
\nabla_{\dot{\gamma}(t)}\xi_{n}(t)=f(\xi_{n}(t)),
\end{eqnarray}
where $\nabla$ is the covariant derivative. 

To facilitate the tracking of the gradients, we assume affine connection for the covariant derivative. Then, the problem becomes estimating the Christoffel symbol $\Gamma_{ij}^k$ that characterizes the connection via the following relationship:
\begin{eqnarray}
\nabla_{\frac{\partial}{\partial x^i}}\frac{\partial}{\partial x^j}=\sum_{k=1}^M \Gamma_{ij}^k \frac{\partial}{\partial x^k}.
\end{eqnarray}

For the covariant derivative of generic vector fields, we need the following lemma.
\begin{lemma}[Prop.2.2, \cite{Godiinho2014}]
For an affine connection $\nabla$ characterized by the Christoffel symbols $\{\Gamma_{ij}^k\}$, and vector fields $X$ and $Y$ with local coordinates given by
\begin{eqnarray}
\left\{
\begin{array}{ll}
&X=\sum_{i=1}^MX^{i}\frac{\partial }{\partial x^i}\\
&Y=\sum_{i=1}^MY^{i}\frac{\partial }{\partial y^i}
\end{array}
\right.,
\end{eqnarray}
then we have
\begin{eqnarray}\label{eq:connection}
\nabla_X Y=\sum_{k=1}^M\left(X\cdot Y^k+\sum_{i,j=1}^M \Gamma_{ij}^k X^iY^j\right)\frac{\partial}{\partial x^k}
\end{eqnarray}
\end{lemma}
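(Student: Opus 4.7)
The plan is to expand both vector fields in the coordinate basis and then invoke the three defining properties of an affine connection: $C^{\infty}$-linearity in the lower slot $X$, the Leibniz rule in the upper slot $Y$, and the coordinate definition of the Christoffel symbols. Since the claim is purely algebraic, no estimates or topology enter.

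First, I would write
\begin{equation*}
\nabla_X Y = \nabla_{\sum_i X^i \partial_i}\Bigl(\sum_j Y^j \partial_j\Bigr),
\end{equation*}
and use tensoriality of $\nabla$ in the lower argument to pull each function $X^i$ outside, obtaining $\sum_i X^i\, \nabla_{\partial_i}\bigl(\sum_j Y^j \partial_j\bigr)$. For each fixed $i$, the Leibniz rule in the upper argument yields
\begin{equation*}
\nabla_{\partial_i}(Y^j \partial_j) = (\partial_i Y^j)\,\partial_j + Y^j\, \nabla_{\partial_i}\partial_j.
\end{equation*}
Summing over $j$, the first piece reassembles into $\sum_j (X\cdot Y^j)\,\partial_j$, because a vector field acts on a smooth function by $X\cdot f = \sum_i X^i\,\partial_i f$. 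For the second piece I substitute the defining relation $\nabla_{\partial_i}\partial_j = \sum_k \Gamma_{ij}^k \partial_k$, producing $\sum_k \Gamma_{ij}^k X^i Y^j\, \partial_k$.

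Relabeling the dummy index $j \to k$ in the first group so that every summand sits as a coefficient of a single basis field $\partial/\partial x^k$, and collecting, I arrive at
\begin{equation*}
\nabla_X Y = \sum_{k=1}^{M}\Bigl(X\cdot Y^k + \sum_{i,j=1}^{M} \Gamma_{ij}^k\, X^i Y^j\Bigr)\frac{\partial}{\partial x^k},
\end{equation*}
which is the claimed identity. The main obstacle is purely bookkeeping: keeping the three dummy indices $i,j,k$ distinct so that the derivative-of-coefficient term and the Christoffel term end up as coefficients of the same basis vector before they are added. There is no analytic content beyond the axioms of a linear connection, which is why the identity is usually stated as a proposition rather than reproved; citing Godinho--Natário suffices, but the short computation above keeps the paper self-contained.
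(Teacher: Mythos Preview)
Your proof is correct and is the standard textbook derivation. The paper itself does not prove this lemma at all; it is stated with a citation to Godinho--Nat\'{a}rio and used as an external input, so there is no in-paper argument to compare against. Your short computation from the axioms of an affine connection is exactly what that reference does and would make the paper self-contained on this point.
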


Fixing $k$, we consider $\Gamma^k=(\Gamma_{ij}^k)_{ij}$ as a matrix. Then, we have
\begin{eqnarray}
\frac{D\xi_n^k(t)}{dt}&=&\dot{\xi}_n^k(t)+\sum_{i,j=1}^M \Gamma_{ij}^k \dot{\gamma}_n^i(t)\xi_n^j(t)\nonumber\\
&=&\dot{\xi}_n^k(z(t))+\dot{\gamma}_n^T(t)\mathbf{\Gamma}^k\xi_n(t).
\end{eqnarray}
Here the gradient $\nabla_z\xi_n^k(z(t))$ is dependent on the second order derivative of $U_n$ and is unknown. However, we assume that $U_n$ changes slowly in $\mathbf{z}$ and has a small second order derivative. 

For intuition, we check the covariant derivative in (\ref{eq:connection}). We find that the change of the vector field depends on the following two terms:
\begin{itemize}
\item Linear term $\sum_{k=1}^MX\cdot Y^k\frac{\partial}{\partial x^k}$: the change of the $k$-th component in $\xi_n$ depends on only the derivative of $\xi_n^k$ and the change of the base manifold. 

\item Quadratic term $\sum_{k=1}^M \sum_{i,j=1}^M \Gamma_{ij}^k X^iY^j$: the change of the $k$-th component in $\xi_n$ also depends on the other components in $\xi_n$.
\end{itemize}

\subsubsection{Special Case: Quadratic Utilities}
When the quadratic utility function, as the reference model, is substituted into the expression, we have
\begin{eqnarray}
\xi_{\mathbf{x}}(\mathbf{u})&=&\nabla_{\mathbf{u}_n}U_n(\mathbf{x}_n(t+1),\mathbf{u}_n)\nonumber\\
&=&-2\mathbf{B}_n^T\mathbf{Q}_n(\mathbf{x}_n(t+1)-\mathbf{x}_n^0)-2\mathbf{R}_n\mathbf{u}_n(t),
\end{eqnarray}
Therefore, when $\xi(\cdot,\mathbf{u})$ moves from $\mathbf{x}$ to $\mathbf{x}+\delta \mathbf{x}$, we have
\begin{eqnarray}
\delta \xi(\mathbf{x},\mathbf{u})
=\mathbf{B}_n^T\mathbf{Q}_n\mathbf{A}_n\delta \mathbf{x},
\end{eqnarray}
which is a linear transform of $\delta \mathbf{x}$. Therefore, when the utility functions are quadratic, the affine connection is linear and thus the space is flat.

When there are higher order cross terms in $U(\mathbf{x},\mathbf{u})$, namely
\begin{eqnarray}
\xi_{\mathbf{x}}(\mathbf{u})&=&-2\mathbf{B}_n^T\mathbf{Q}_n(\mathbf{x}_n(t+1)-\mathbf{x}_n^0)-2\mathbf{R}_n\mathbf{u}_n(t)\nonumber\\
&-&\sum_{i}x_{ni} \mathbf{K}_i\mathbf{u}_n.
\end{eqnarray}

Therefore, we have
\begin{itemize}
\item The linear term becomes a constant $\sum_{k=1}^M D_{ki}X_i\frac{\partial}{\partial x^k}$.
\item The cross quadratic term vanishes when there are only quadratic terms in the utility function, and is nonzero when there are higher order cross terms. 
\end{itemize}

When the utility functions are no longer quadratic, the connection can be considered as a combination of linear term $\sum_{k=1}^M D_{ki}X_i\frac{\partial}{\partial x^k}$ and quadratic term, as a refinement on the quadratic utility function $\sum_{k=1}^M \sum_{i,j=1}^M \Gamma_{ij}^k X^iY^j$. Moreover, we assume that the linear coefficients $\left\{D_{ki}\right\}_{ki}$ and the Christoffer symbols $\{\Gamma^k_{ij}\}$ are constants (at least locally). 

\subsubsection{Base Manifold $(\mathbf{x},\mathbf{u})$}
Now, we consider the base manifold as $(\mathbf{x},\mathbf{u})$ and the utility function derivatives as a vector field over $(\mathbf{x},\mathbf{u})$.
Then, we approximate the differential with difference, namely
\begin{eqnarray}
\xi_j(t+1)-\xi_j(t)&\approx&\left(\mathbf{z}(t+1)-\mathbf{z}(t)\right)^T\mathbf{d}_j\nonumber\\
&+&(\mathbf{z}(t+1)-\mathbf{z}(t))^T\Gamma_j\xi(t).
\end{eqnarray}
which is summarized into
\begin{eqnarray}
\delta_{\xi,j}=\Delta_z\mathbf{d}_j+\Delta_z\Gamma_j\xi,
\end{eqnarray}
and
\begin{eqnarray}
\delta_{\xi,j}&=&(\Delta_z\otimes \mathbf{I})\mathbf{d}_j+(\Delta_z\otimes \xi) vec(\Gamma_j)\nonumber\\
&=&(\Delta_z\otimes \mathbf{I},\Delta_z\otimes \xi)(\mathbf{d}_j^T,vec(\Gamma_j)^T)^T,
\end{eqnarray}
which results in
\begin{eqnarray}
(\mathbf{d}_j^T,vec(\Gamma_j)^T)^T=((\Delta_z\otimes \mathbf{I},\Delta_z\otimes \xi))^{+}\delta_{\xi,j}
\end{eqnarray}

\begin{figure}
  \centering
  \includegraphics[scale=0.55]{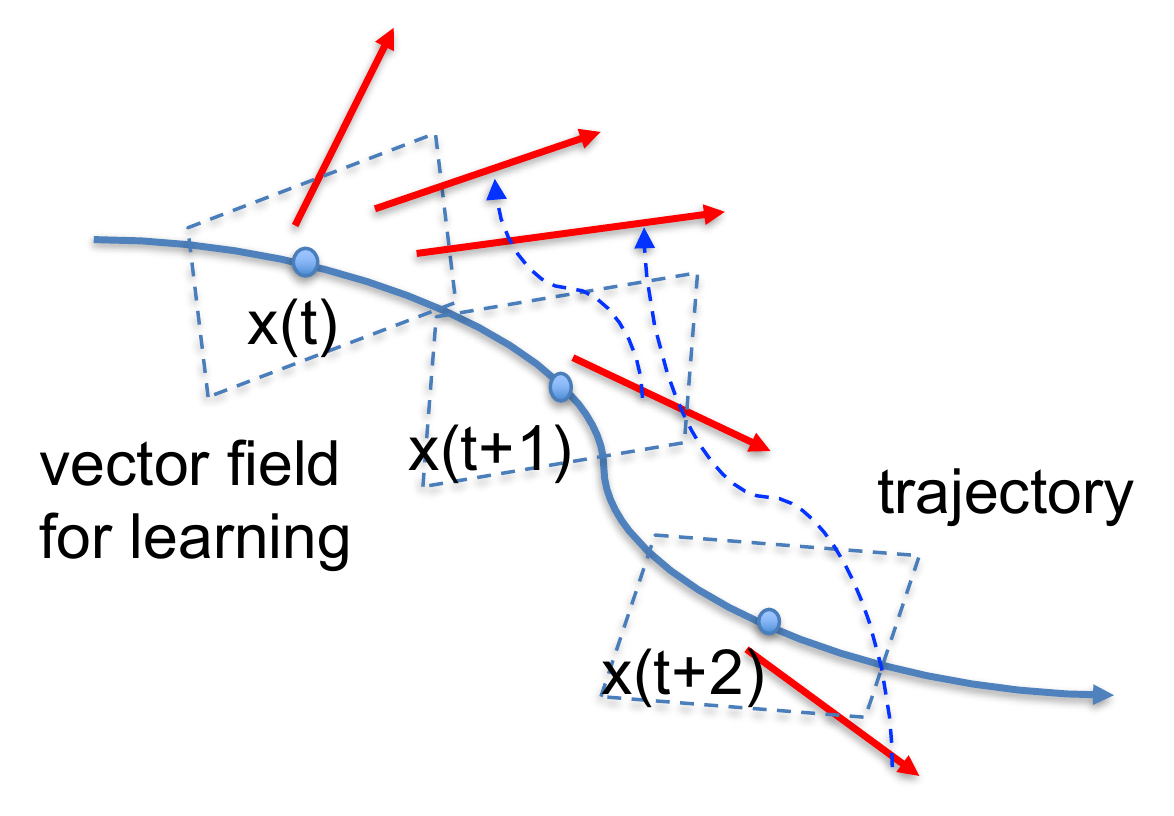}
  \caption{An illustration of pullback and learning.}\label{fig:pullback}
  \vspace{-0.1in}
\end{figure}

\subsection{Special Case: Decomposable Utilities}

\begin{figure}
  \centering
  \includegraphics[scale=0.55]{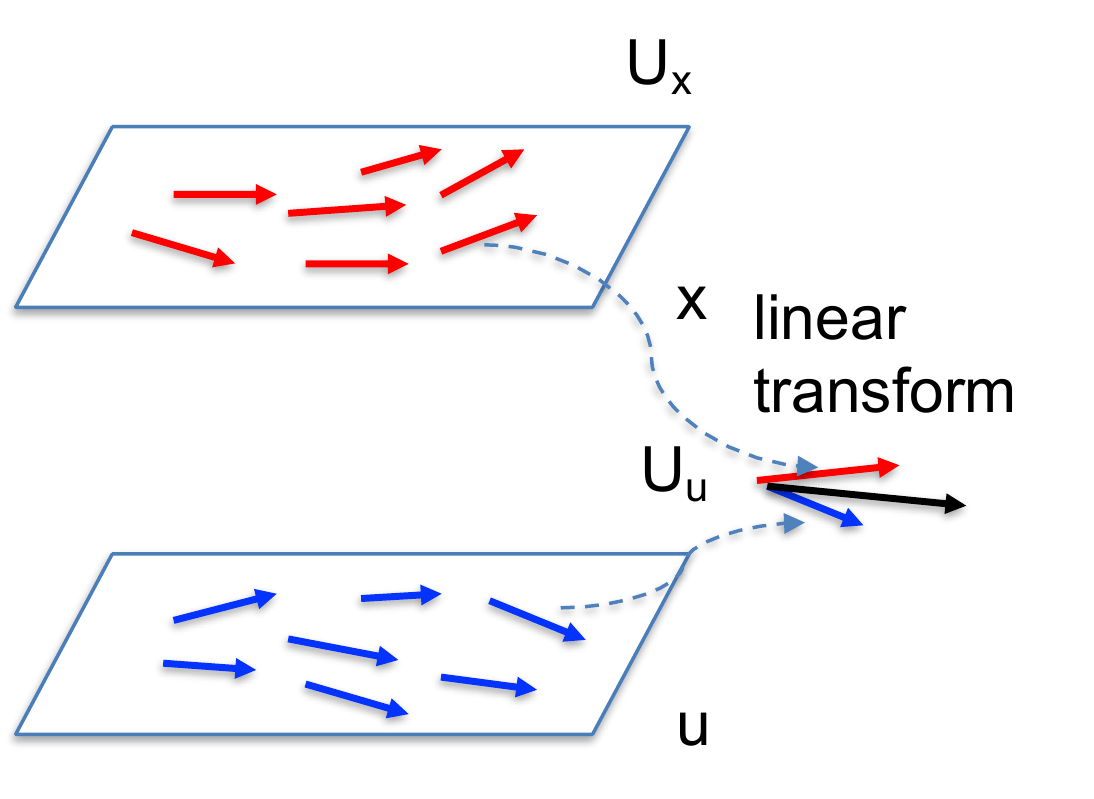}
  \caption{Vector field of decomposable utility functions.}\label{fig:decompose}
  \vspace{-0.1in}
\end{figure}

We assume that the utility function of each agent can be decomposed into two parts:
\begin{eqnarray}
U_n(\mathbf{x}_n,\mathbf{u}_n)=U_n^x(\mathbf{x}_n)+U_n^u(\mathbf{u}_n),
\end{eqnarray}
which is valid for the reference quadratic utility functions. 

The decomposability implies
\begin{eqnarray}\label{eq:dev_Un}
&&\nabla_{\mathbf{u}_n}U_n(\mathbf{x}_n(t+1),\mathbf{u}_n)\nonumber\\
&=&\frac{\partial \mathbf{x}_n(t+1)}{\partial \mathbf{u}_n}\nabla_{\mathbf{x}_m}U_n^x(\mathbf{x}_n)+\nabla_{\mathbf{u}_n}U_n^u(\mathbf{u}_n)\nonumber\\
&=&\mathbf{B}_n\nabla_{\mathbf{x}_n}U_n^x(\mathbf{x}_n)+\nabla_{\mathbf{u}_n}U_n^u(\mathbf{u}_n)
\end{eqnarray}
We observe that the vector field $\nabla_{\mathbf{u}_n}U_n$ over the $2d$-dimensional space $(\mathbf{x}_n,\mathbf{u}_n)$ can be decomposed to two vector fields, namely $\nabla_{\mathbf{x}_n}U_n^x$ over $\mathbb{R}^n$ and $\nabla_{\mathbf{u}_n}U_n^u$ over $\mathbb{R}^n$. Therefore, if we can estimate $\nabla_{\mathbf{x}_n}U_n^x$ and $\nabla_{\mathbf{u}_n}U_n^u$, then we can obtain $\nabla_{\mathbf{u}_n}U_n$ over the much higher base space. However, the challenge is that we cannot observe $\nabla_{\mathbf{x}_n}U_n^x$ and $\nabla_{\mathbf{u}_n}U_n^u$ directly. What we observe is the linear combination of the two vector fields.

Using the vector filed learning discussed in ???, we have
\begin{eqnarray}
\left\{
\begin{array}{ll}
&\hat{\nabla}_{\mathbf{x}_n}U_n^x(\mathbf{x}_n)=\sum_{i=1}^n\mathbf{\Gamma}^x(\mathbf{x},\mathbf{x}_i)\mathbf{c}^x_i\\
&\hat{\nabla}_{\mathbf{u}_n}U_n^u(\mathbf{u}_n)=\sum_{i=1}^n\mathbf{\Gamma}^u(\mathbf{u},\mathbf{u}_i)\mathbf{c}^u_i
\end{array}
\right..
\end{eqnarray}
The goal of learning is to optimize the parameters such that the prediction error is minimized, namely
\begin{eqnarray}
\min_{\mathbf{\Gamma}^x,\mathbf{\Gamma}^u}\left |\sum_{i=1}^n \mathbf{B}_n\hat{\nabla}_{\mathbf{x}_n}U_n^x(\mathbf{x}_n)-\hat{\nabla}_{\mathbf{u}_n}U_n^u(\mathbf{u}_i)-\mathbf{p}\right|^2
\end{eqnarray}

\section{Application in UAM Systems}

\section{Numerical Simulations}\label{sec:numerical}

\section{Conclusions}\label{sec:conclusion}

\appendices
\section{Mechanism Design: A Survey}

\subsection{Generic Mechanism Design}
We assume that there are $N$ agents, each having a local parameter $\theta_n\in \Theta_n$, $n=1,...,N$. A social choice function output $F(\theta_1,...,\theta_n):\prod_{n=1}^N\Theta_n\rightarrow \mathbb{R}^k$ is desired based on the local parameters. Each agent $n$ discloses a message $m_n=\mu_n(\theta_n)$ in a predetermined message space $M$, where $\mu_n$ is its message mapping. Each message is the output of local optimization, namely
\begin{eqnarray}
m_n=\{m|g_n(m,\theta_n)=0\},
\end{eqnarray}
where $g_n(\cdot,\cdot)$ is the function characterizing the optimization taken by agent $n$. Then, the overall message is given by
\begin{eqnarray}
m=\cap_{n=1}^N m_n,
\end{eqnarray}
due to the information decentralization. Then, there exists a function $h:M\rightarrow\mathbb{R}^k$ such that
\begin{eqnarray}
F(\theta_1,...,\theta_n)=h(m).
\end{eqnarray} 
The whole procedure is illustrated in Fig. \ref{fig:illu}.

\begin{figure}
  \centering
  \includegraphics[scale=0.5]{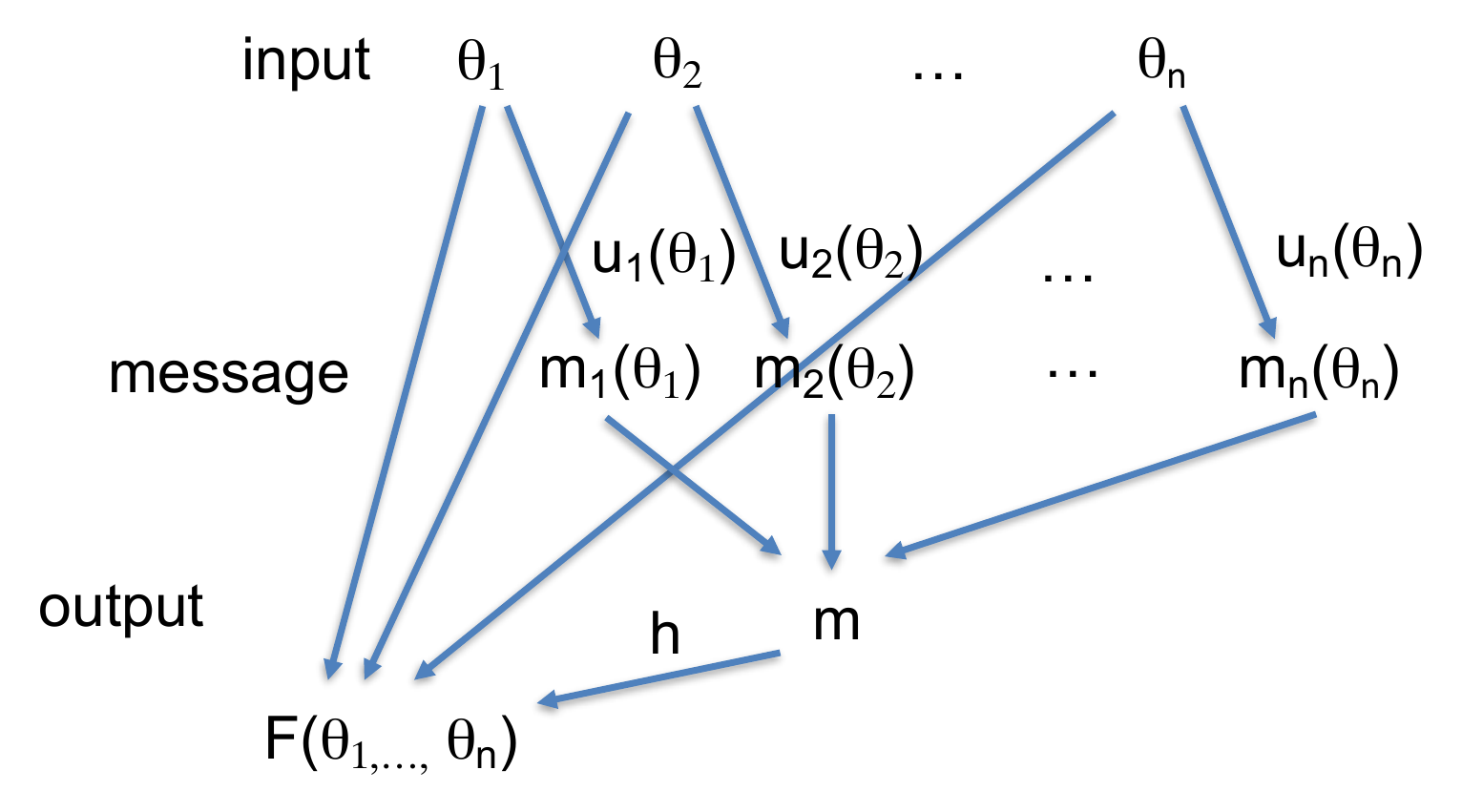}
  \caption{An illustration of economic mechanism.}\label{fig:illu}
  \vspace{-0.1in}
\end{figure}

In the context of economics, the mapping $\mu_i$ is equivalent to a set of equations given by
\begin{eqnarray}
g_n(\theta_n,m)=0,\qquad \theta_n\in \Theta_n, m\in M, n=1,...,N,
\end{eqnarray}
where $g_n:\Theta_n\times M\rightarrow \mathbb{R}^{c_n}$ represent $c_n$ equations. Here $g_n=0$ can be considered as the condition for the optimal solution to an optimization problem (e.g., maximizing local reward).  

The goal of mechanism design is to find the correspondence $\{\mu_i\}_{i=1,...,N}$. An effective approach for mechanism learning is to consider the geometry of economic mechanism.
The goal is to find the maps $\{u_n\}$ and $h$ to realize the goal function $F$, such that $F(\theta_1,...,\theta_N)=h(u_1(\theta_1),...,u_N(\theta_N))$. On one hand, the dimension of the message space may be lower than the parameter space, thus reducing the amount of needed communications; on the other hand, the message function $u_n$, if not 1-to-1 mapping, can prevent agent $n$ from disclosing the full information of $\theta_n$ by reporting $m_n$ instead of $\theta_n$, thus protecting its data privacy. A typical example is the auction for a single good and $N$ auctioneers. The private parameter of auctioneer $n$ is its value on the good $v_n$, while the corresponding message is its bid $b_n$. Note that $b_n$ could be different from $v_n$ since the auctioneer may not want to disclose its valuation on the good. The output function $F$ is the decision on the winner $w$ and the price $p$ that the winner needs to pay.  A good auction mechanism should set the output $(w,p)$ properly, such that the honest report $b_n=v_n$ can maximize the auctioneer $n$'s reward. 

\subsection{Geometric Structure of Mechanism: Single Agent}
It is L. Hurwicz who noticed the geometric structure in the economic mechanism, which was developed by ? Williams to develop the calculus based framework of mechanism design. A quick tutorial is needed to explain this geometric image, by beginning from the single-agent case ($N=1$). Consider a function $F:\mathbb{R}^n\rightarrow\mathbb{R}^m$. It can be represented by its level sets $S(\mathbf{x})=\{\mathbf{x}'|F(\mathbf{x}')=\mathbf{x}\}$, each of which has an index, and is a $d$-dimensional submanifold of $\mathbb{R}^n$ for a certain integer $d$. In the terminology of differential geometry, such sub-manifolds are called foliations. A distribution means a selection of $d$-subspace at the tangent space of each point. When the foliation is 1-dimensional curve ($d=1$), the distribution is simply the tangent lines of the curve. The foliation can be obtained from the distribution by integration. Meanwhile, the distribution is orthogonal to the gradient of the function $F$. The relationships of mapping, foliation and distribution are summarized in Fig. \ref{fig:three}. The diverse representations of the goal function facilitates the design of mechanisms, and play the fundamental role in the proposed research.

\begin{figure}
  \centering
  \includegraphics[scale=0.45]{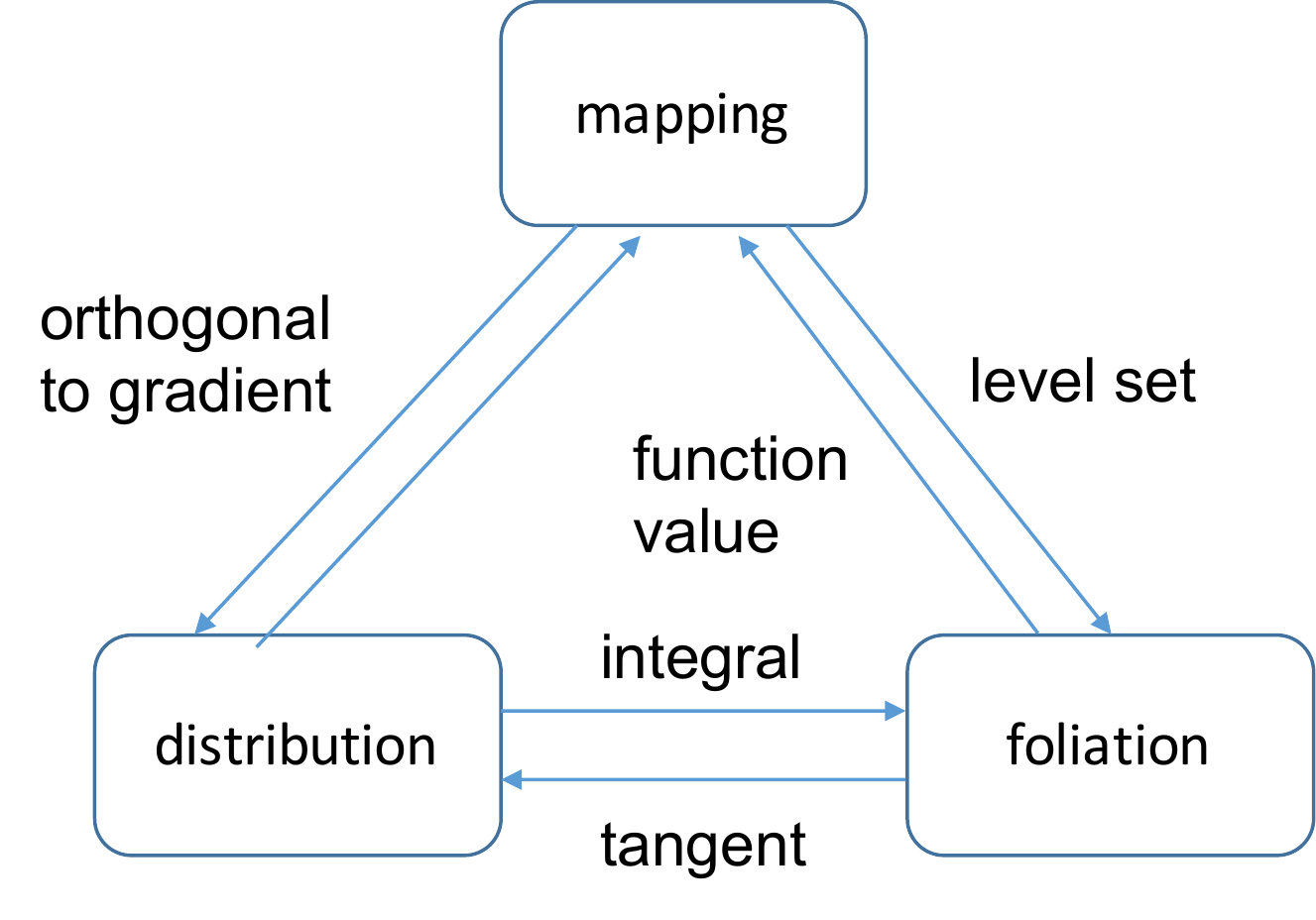}
  \caption{Relationships among mapping, foliation and distribution.}\label{fig:three}
  \vspace{-0.1in}
\end{figure}

A naive example is shown in Fig. \ref{fig:levels} to illustrate the relationships among distribution, foliation and mapping, which is on the allocation of two goods. Here the private parameter is the utility function $U:\mathbb{R}^2_+\rightarrow\mathbb{R}$, whose arguments are the allocations of the two goods. The goal function is $F(U)=U(\mathbf{x})\in \mathbb{R}$, where $\mathbf{x}$ is a fixed point in $\mathbb{R}^2_+$. Then, the agent simply reports the value of utility function, namely $m(U)=U(\mathbf{x})$. In Fig. \ref{fig:levels}, the geometry of the mechanism is shown. The quarter plane $\mathbb{R}_+^2$ is partitioned into level sets, each corresponding to a foliation and resulting the same value of utility (thus being the indifference curve). The utility function value is the index of the level sets. Three level sets of the utility function $U$ are plotted, which are assumed to be 1-dimensional. The distribution $D$ is the tangent lines of the level sets (foliations). The level sets can be obtained by integrating along the tangent lines (the distribution). Note that the optimal price vector (which the gradient of the utility function) is orthogonal to the tangent line, since $\mathbf{p}(\mathbf{x_0})\cdot \mathbf{x}_0\geq \mathbf{p}(\mathbf{x_0})\cdot \mathbf{x}$ (the maximum wealth), and $x_0$ is the tangent point of the indifferent curve and the budget $\mathbf{p}(\mathbf{x_0})\cdot (\mathbf{x}_0-\mathbf{x})=0$.

\begin{figure}
  \centering
  \includegraphics[scale=0.45]{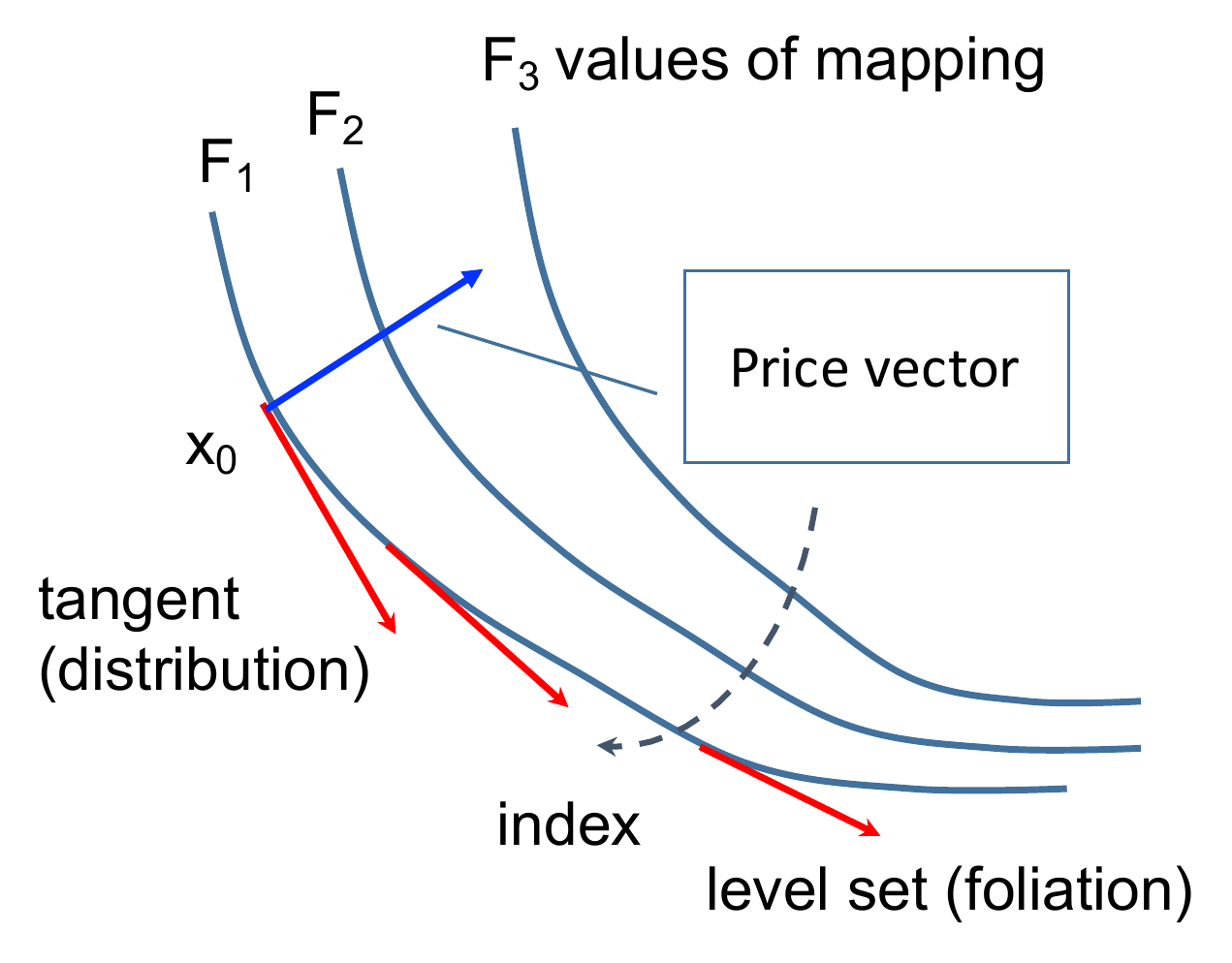}
  \caption{An illustration of distribution, foliation and mapping.}\label{fig:levels}
  \vspace{-0.1in}
\end{figure}

\subsection{Geometric Structure of Mechanism: Multiple Agents}
The above geometric image is for a single agent. In the generic case of multiple agents, a foliation (level set) is replaced with the product of the foliation sets of different agents, while the distribution (tangent) is also replaced with the product of the corresponding distributions. An example is illustrated in Fig. \ref{fig:levels}, where $N=2$, $k_1=2$ and $k_2=1$. Consider the neighborhood of $\mathbf{x}=(\mathbf{x}_1,x_2)\in \mathbb{R}^3$, where $\mathbf{x}_1\in \mathbb{R}^2$ and $x_2\in \mathbb{R}$. Fix $\mathbf{x}_1$, the foliation $S_2$ for agent 2 is the intersection of the neighborhood and the plane parallel to $\Theta_2$ shown in the figure, while the distribution is along the $x_2$-axis. For agent 1, the foliation $S_1$ is the level set (independent of $x_2$) and the distribution $D_2$ is the corresponding tangent. Then, the overall foliation $S$ and distribution $D$ are given by $S=S_1\times S_2$ and $D=D_1\times D_2$. It is shown in \cite{} that, in a sufficiently small neighborhood, the relationships among $S$, $D $ and $F$ are still the same as the single-agent case. The messages of the agents are the level set indices. When designing the mechanism, an effective approach is to find the distribution by differentiating the goal function and then obtain the foliation by integrating the distribution. The messages are then obtained from the indices of the foliations.

\begin{figure}
  \centering
  \includegraphics[scale=0.3]{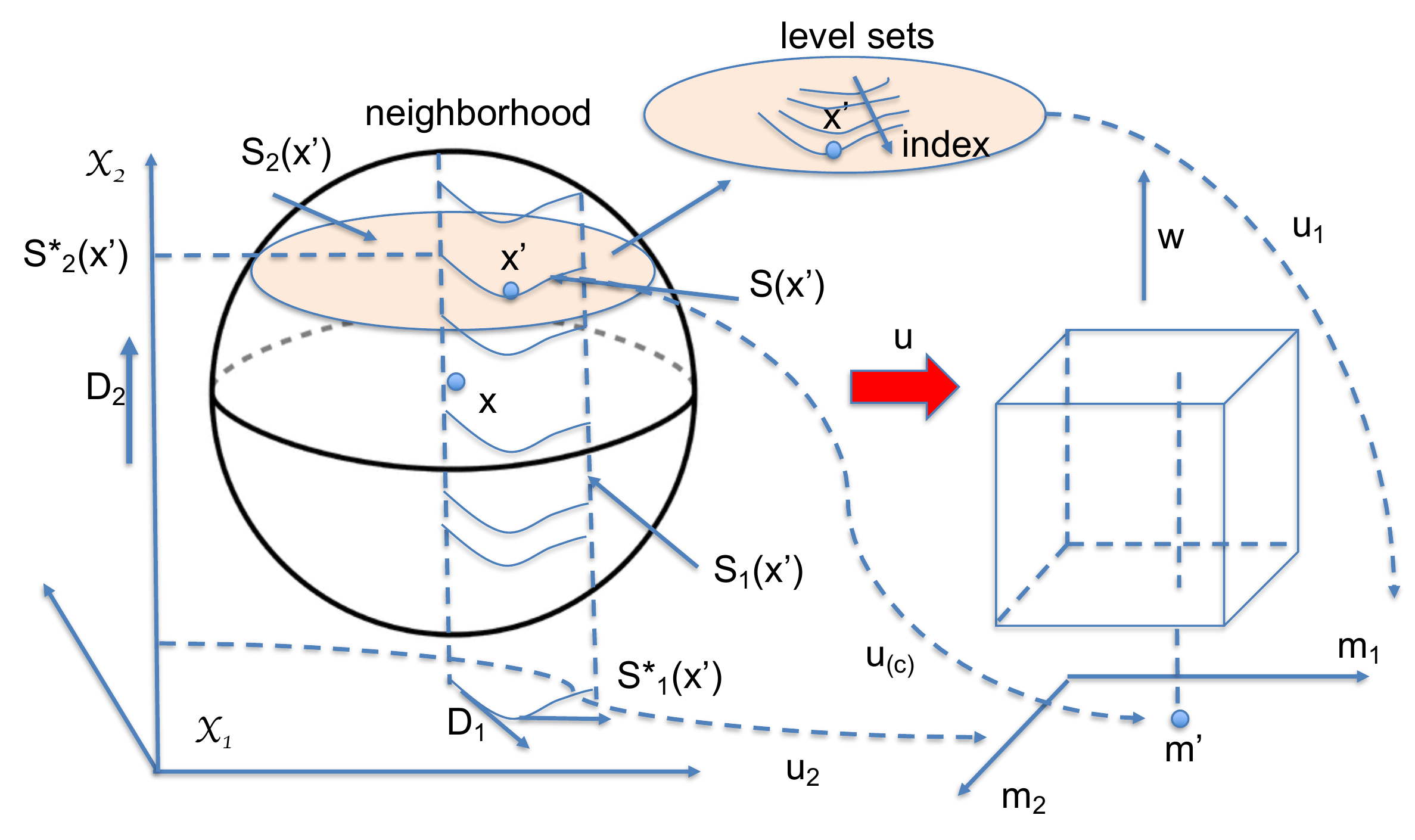}
  \caption{An illustration of the mechanism.}\label{fig:Fro_mech}
  \vspace{-0.1in}
\end{figure}

For the general case, the level set of agent $i$ is denoted by $S_i$ with dimensional $c_i+d_i$, $1\leq i\leq N$, while the corresponding distributions are denoted by $D_1$, ..., $D_n$. The overall level set is the product of the individual level sets, namely
\begin{eqnarray}
S(\theta)=\prod_{i=1}^N S_i(\theta),
\end{eqnarray}
while the overall distribution is the direction sum of the individual ones:
\begin{eqnarray}
D(\theta)=\oplus_{i=1}^ND_i(\theta).
\end{eqnarray}

The following theorem (the mechanism design version of Frobenius Theorem) discloses the geometric structure of mechanism design:
\begin{theorem}\label{thm:Frobenius}
Let $D_1$, ..., $D_n$ be $C^{\infty}$ distributions on $\Theta$. Let $\theta^*$ be any point in $\Theta$. There exists an open neighborhood $O(\theta^*)\subset \Theta$ of $\theta^*$, a local coordinate system $u:O(\theta^*)\rightarrow (-\epsilon,\epsilon)^{c+d}$, and an inverse mapping $v=u^{-1}:(-\epsilon,\epsilon)^{c+d}\rightarrow O(\theta^*)$ such that the following statement holds for any $\theta'\in O(\theta^*)$ and $(w',m')=u(\theta')$:
\begin{itemize}
\item A maximal, connected $d$-dimensional integral manifold $S(\theta')$ of $D$ exists through $\theta'$ in $O(\theta^*)$ and it satisfies
\begin{eqnarray}
S(\theta')&=&\{\theta\in O(\theta^*)|u_c(\theta)=u(\theta')=m'\}\nonumber\\
&=&\{\theta=v(w,m')|w\in (-\epsilon,\epsilon)^d\}
\end{eqnarray}

\item A maximal, connected $d_i$-dimensional integral manifold $S(\theta')$ of $D_i$ exists through $\theta'$ in $O(\theta^*)$ and it satisfies
\begin{eqnarray}
S_i(\theta')&=&\{(\theta_i,\theta_{-i}')\in O(\theta^*)|u(\theta)=u(\theta')=m'\}\nonumber\\
&=&\{\theta| u_c(\theta)=m',u_{j,d_j}(\theta)=w_j',j\neq i\}\nonumber\\
&=&\{\theta=v_i(w,m')|w_i\in (-\epsilon,\epsilon)^d\}
\end{eqnarray}

\item For each $i$, the rank of $D_{\theta_i}u_{i,c_i}$ has rank equal to $c_i$ on $O(\theta^*)$.
\item The mapping $v_i:(-\epsilon,\epsilon)^{c+d}\rightarrow\Theta_i$ depends only on the values of $w_i$ and $m$ and not on the value $w_{-i}$.
\end{itemize}
\end{theorem}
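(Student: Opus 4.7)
The plan is to reduce the multi-agent statement to the classical Frobenius theorem applied separately to each factor $\Theta_i$, exploiting the fact that the parameter space has a product structure $\Theta=\prod_{i=1}^{N}\Theta_i$ and that each distribution $D_i$ is tangent only to its own factor (so $D_i(\theta)\subset T_{\theta_i}\Theta_i\subset T_\theta\Theta$). The first step is to record the involutivity of each $D_i$: in the mechanism-design setup the distributions arise as tangent spaces to indifference/level-set foliations of smooth functions, hence they are automatically involutive; alternatively this is a standing assumption on "distributions" in the theorem. Once involutivity is in hand, the classical Frobenius theorem yields, for each $i$, a local chart $u_i=(u_{i,c_i},u_{i,d_i}):O_i(\theta_i^*)\to(-\epsilon,\epsilon)^{c_i+d_i}$ on a neighborhood of $\theta_i^*$ in which the integral manifolds of $D_i$ are exactly the slices $\{u_{i,c_i}=\text{const}\}$, and $D_{\theta_i}u_{i,c_i}$ has full rank $c_i$.

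The second step is to assemble a product chart. Set $O(\theta^*)=\prod_i O_i(\theta_i^*)$ and define
\begin{equation*}
u(\theta_1,\ldots,\theta_N)=\bigl(u_1(\theta_1),\ldots,u_N(\theta_N)\bigr),
\end{equation*}
then permute coordinates so that $u=(u_c,w)$ with $u_c=(u_{1,c_1},\ldots,u_{N,c_N})$ and $w=(u_{1,d_1},\ldots,u_{N,d_N})$. The inverse $v=u^{-1}$ inherits the block-diagonal form $v(w,m)=(v_1(w_1,m_1),\ldots,v_N(w_N,m_N))$, so in particular $v_i$ depends only on $w_i$ and the $i$-th block $m_i$ of the message coordinate $m$, which is stronger than what the theorem asks for.

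The third step is to verify each of the four claims against this construction. For the individual integral manifold $S_i(\theta')$: holding $\theta_{-i}'$ fixed reduces to the single-agent Frobenius output on $\Theta_i$, giving the three equivalent descriptions of $S_i(\theta')$ listed in the theorem. For the product distribution $D=\oplus_iD_i$: its integral manifold through $\theta'$ is the Cartesian product $\prod_i S_i(\theta')$, which by construction coincides with $\{\theta\in O(\theta^*):u_c(\theta)=m'\}=\{v(w,m'):w\in(-\epsilon,\epsilon)^d\}$. Maximality and connectedness follow from the corresponding statements on each factor and the fact that products of connected sets are connected. The rank statement on $D_{\theta_i}u_{i,c_i}$ is immediate from the classical Frobenius output, and the dependence claim on $v_i$ is immediate from the block-diagonal form of $v$.

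The principal obstacle is not the product assembly, which is essentially bookkeeping, but the invocation of the classical Frobenius theorem itself: one must either take involutivity of each $D_i$ as a hypothesis or derive it from the way these distributions are generated in the mechanism-design framework (as orthogonal complements to gradients of smooth utility and goal functions, which produces a foliation and hence an involutive tangent distribution). The remainder of the argument is a routine transfer of the single-factor chart to the product, where the crucial structural input — and the reason the multi-agent statement is not merely a restatement of Frobenius — is that each $D_i$ is tangent to its own factor $\Theta_i$, so that the product of single-factor charts is itself a chart in which all four conclusions hold simultaneously.
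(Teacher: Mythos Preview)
The paper does not supply its own proof of this theorem. Theorem~\ref{thm:Frobenius} appears in the appendix titled ``Mechanism Design: A Survey,'' where it is stated without proof as a known structural result from the mechanism-design literature (the surrounding text points to Hurwicz--Reiter \cite{Hurwicz2006} and Williams \cite{Williams2008}); it is followed only by a remark on the interpretation of $c_i$ and $d_i$. So there is nothing in the paper to compare your argument against.

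On the merits, your plan is the standard one and is correct: because $\Theta=\prod_i\Theta_i$ and each $D_i$ is tangent to its own factor, the classical Frobenius theorem applied factor-by-factor yields charts $u_i=(u_{i,c_i},u_{i,d_i})$ whose product chart delivers all four bullets simultaneously. You are right to flag involutivity as the only substantive hypothesis that the theorem statement in the paper leaves implicit; in the source literature (Williams \cite{Williams2008}) it is an explicit standing assumption, and in the paper's setting it holds because each $D_i$ is by construction the tangent distribution of a level-set foliation. Your observation that the block-diagonal $v$ gives $v_i$ depending only on $(w_i,m_i)$, rather than on all of $m$, is also correct and is simply an artifact of choosing a product chart; the theorem's weaker formulation (dependence on $w_i$ and $m$) accommodates more general message parameterizations that need not split, but your construction certainly satisfies it.
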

\begin{remark}
For each $i$, $c_i$ is the dimension of the useful information for the computing while $d_i$ is the dimension of the information that does not contribute to the computing. The total communication complexity, in terms of dimension, is $\sum_{i=1}^N c_i$. 
\end{remark}

\section{Mechanism Learning}
In traditional studies, the mechanism is designed using explicit analysis (e.g., linear programming \cite{} and level sets \cite{}) by human researchers. However, except for the simple case of two agents and a single goods \cite{}, the optimal solutions to most mechanism design problems have not been identified, probably due to the high complexity of problem. In the last two decades, there has been a trend to design the mechanism using machine learning, which takes the numerical methodology based on samples and is coined automated mechanism design (AMD). Thanks to the rapidly increasing computational capabilities of modern computers, there have been substantial breakthroughs in the area of AMD. Essentially, the AMD approach is to use sufficiently complex functions (e.g., SVM or deep neural network) to approximate the input-output relationship $F(\theta_1,...,\theta_n)$, while keeping the reports for the agents incentive compatible. The samples are obtained by randomly generating the private parameters $\{\theta_n\}_n$. The output of the learning procedure is the functions $g$ and $\{\psi_n\}_n$ in the given forms (e.g., neural network). The procedure is sketched in Fig. \ref{fig:learning}. Traditional studies on AMD include \cite{Conitzer2002,Conitzer2004,Guo2010,Sui2013} use heuristic searches. The deep learning approach is employed for AMD in \cite{Dutting2019}, while SVM is applied in \cite{Narasinimhan2016}. The sample complexity of AMD has been analyzed in \cite{Balcan2005}. However, such a learning methodology meets the following severe challenges in the context of spectrum markets for communications and sensing:
\begin{itemize}
\item Prior Distribution: In the spectrum market scenario, the major private parameters are the utility functions of the agents. Due to the complexity of function spaces, it is difficult to devise a good prior distribution for the utility functions. Parameterized functions with predetermined forms may not well generalize.
\item Offline Learning: The proposed mechanism learning algorithms are mostly offline, without considering the feedbacks of the agents during the operation. It may be more effective to learn the mechanism in an online manner, similarly to reinforcement learning. 
\item Black Box: Most existing mechanism learning algorithms are designed in a black box manner, without exploiting the intrinsic structure of the mechanism, which substantially decrease the efficiency of mechanism learning.  
\end{itemize}
Law enaction is a real world practice of mechanism design. A law is seldom set with offline and blackbox computations with artificial distributions. It needs to incorporate the understanding of laws (thus the structure) and be refined in the operation. Therefore, the PI plans to devise mechanism learning algorithms by exploiting online operation feedbacks and the geometric structures of mechanism, which will be elaborated in the details of Task 1. 

\section{Proof of Theorem \ref{thm:full_quadratic}}\label{appdx:proof_quadratic}
We consider $p_1$ and the case of scalar $A_1$ and $A_2$, which satisfy
\begin{eqnarray}\label{eq:scalar}
\left\{
\begin{array}{ll}
A_1u_1-\frac{\partial P}{\partial u_1}=0\\
A_2u_2-\frac{\partial P}{\partial u_2}=0
\end{array}
\right..
\end{eqnarray}

The Bordered Mixed Hessian matrix is given by
\begin{eqnarray}
BMH_{A_1,A_2}(p_1)=
\begin{pmatrix}
0 & \frac{\partial p_1}{\partial A_1}\\
\frac{\partial p_1}{\partial A_2} & \frac{\partial^2 p_1}{\partial A_1\partial A_2}
\end{pmatrix}
\end{eqnarray}

We simply need to verify whether $\frac{\partial p_1}{\partial A_1}=0$ or $\frac{\partial p_1}{\partial A_2}=0$. Recall that $p_1=A_1u_1$; therefore, we have $\frac{\partial p_1}{\partial A_1}=u_1+A_1\frac{\partial u_{1}}{\partial A_1}$ and $\frac{\partial p_1}{\partial A_2}=A_1\frac{\partial u_1}{\partial A_2}$. Therefore, we need to calculate $\frac{\partial u_{1}}{\partial A_1}$ and $\frac{\partial u_{1}}{\partial A_2}$.

Taking derivative with respect to $A_1$ on the first equation in (\ref{eq:scalar}), we obtain
\begin{eqnarray}
u_1+A_1\frac{\partial u_1}{\partial A_1}-\left(\frac{\partial u_1}{\partial A_1}\frac{\partial^2 P}{\partial u_1^2}+\frac{\partial u_2}{\partial A_1}\frac{\partial^2 P}{\partial u_1\partial u_2}\right)=0.
\end{eqnarray}

Taking derivative with respect to $A_2$ on the first equation in (\ref{eq:scalar}), we obtain
\begin{eqnarray}
A_1\frac{\partial u_1}{\partial A_2}-\left(\frac{\partial u_1}{\partial A_2}\frac{\partial^2 P}{\partial u_1^2}+\frac{\partial u_2}{\partial A_2}\frac{\partial^2 P}{\partial u_1\partial u_2}\right)=0.
\end{eqnarray}

Taking derivative with respect to $A_1$ on the second equation in (\ref{eq:scalar}), we obtain
\begin{eqnarray}
A_2\frac{\partial u_2}{\partial A_1}-\left(\frac{\partial u_1}{\partial A_1}\frac{\partial^2 P}{\partial u_1\partial u_2}+\frac{\partial u_2}{\partial A_1}\frac{\partial^2 P}{\partial u_2^2}\right)=0.
\end{eqnarray}

Taking derivative with respect to $A_2$ on the second equation in (\ref{eq:scalar}), we obtain
\begin{eqnarray}
u_2+A_2\frac{\partial u_2}{\partial A_2}-\left(\frac{\partial u_1}{\partial A_2}\frac{\partial^2 P}{\partial u_1\partial u_2}+\frac{\partial u_2}{\partial A_2}\frac{\partial^2 P}{\partial u_2^2}\right)=0.
\end{eqnarray}

Writing the above equations in the matrix form, we have
\begin{small}
\begin{eqnarray}\label{eq:matrix_form}
\begin{pmatrix}
A_1-\frac{\partial^2 P}{\partial u_1^2} & 0 & -\frac{\partial^2 P}{\partial u_1\partial u_2} & 0\\
0 & A_1- \frac{\partial^2 P}{\partial u_1^2} & 0 & -\frac{\partial^2 P}{\partial u_1\partial u_2}\\
-\frac{\partial^2 P}{\partial u_1\partial u_2} & 0 & A_2-  \frac{\partial^2 P}{\partial u_2^2} & 0\\
0 &  -\frac{\partial^2 P}{\partial u_1\partial u_2} & 0 & A_2- \frac{\partial^2 P}{\partial u_2^2}
\end{pmatrix}
\begin{pmatrix}
\frac{\partial u_1}{\partial A_1}\\
\frac{\partial u_1}{\partial A_2}\\
\frac{\partial u_2}{\partial A_1}\\
\frac{\partial u_2}{\partial A_2}
\end{pmatrix}
=
\begin{pmatrix}
-u_1\\
0\\
0\\
-u_2
\end{pmatrix}
\end{eqnarray}
\end{small}

\begin{itemize}
\item If $\frac{\partial p_1}{\partial A_2}=0$, we have $\frac{\partial u_1}{\partial A_2}=0$. From the second row of (\ref{eq:matrix_form}), we have $\frac{\partial u_2}{\partial A_2}=0$, which conflict with the fourth row of (\ref{eq:matrix_form}).

\item if $\frac{\partial p_1}{\partial A_2}=0$, we have $\frac{\partial u_1}{\partial A_1}=-\frac{u_1}{A_1}$. From the first row of (\ref{eq:matrix_form}), we have $\frac{\partial u_1}{\partial A_2}=-\frac{\frac{ u_1}{ A_1}\frac{\partial^2 P}{\partial u_1^2}}{\frac{\partial^2P}{\partial u_1\partial u_2}}$. Substituting it into the third row of (\ref{eq:matrix_form}), we have
\begin{eqnarray}
\left(\frac{\partial^2 P}{\partial u_1\partial u_2}\right)^2=A_2\frac{\partial^2 P}{\partial u_1^2}-\left(\frac{\partial^2 P}{\partial u_1^2}\right)^2
\end{eqnarray}
\end{itemize}

An alternative approach: When the dimension can be reduced, for any $\mathbf{p}_1$ and $\mathbf{p}_2$, for any $\mathbf{A}_1$, the level set
\begin{eqnarray}
\mathcal{A}_2(\mathbf{p}_1,\mathbf{p}_2,\mathbf{A}_1)=\left\{\mathbf{A}_2|F(\mathbf{A}_1,\mathbf{A}_2)=(\mathbf{p}_1,\mathbf{p}_2)|\mathbf{A}_1\right\}.
\end{eqnarray}
is independent of $\mathbf{A}_1$. From the first equation of (\ref{eq:scalar}), $\mathbf{u}_2$ is also uniquely determined (?). Hence, 
\begin{eqnarray}
\mathcal{A}_2=\{\mathbf{A}_2 \mathbf{u}_2=\mathbf{p_2}\}.
\end{eqnarray}
Since $\mathbf{A}_1\mathbf{u}_1=\mathbf{p_1}$ and $\mathbf{A}_1$ is full rank, $\mathbf{u}_1$ is fixed. Meanwhile, we have $\mathbf{A}_2\mathbf{u}_2=\mathbf{p_1}$.

Now, we take an alternative $\mathbf{A}_1'$ such that 
\begin{eqnarray}
\mathbf{u}_1'={\mathbf{A}_1'}^{-1}\mathbf{p}_1\neq \mathbf{u}_1.
\end{eqnarray}
The reason for the existence of $\mathbf{A}_1'$ is: suppose that there is no such an $\mathbf{A}_1'$, we have $(\mathbf{A}_1-\mathbf{A}_1')\mathbf{u}_1=0$ for all $\mathbf{A}'_1$, which is impossible when $\mathbf{u}_1\neq 0$. Then, the different $\mathbf{u}_1'$ results in different $\mathbf{u}_2'$. The corresponding level set is given by
\begin{eqnarray}
\mathcal{A}_2'=\{\mathbf{A}_2 \mathbf{u}_2'=\mathbf{p_2}\}.
\end{eqnarray}
If $\mathcal{A}_2'=\mathbf{A}_2$, we have
\begin{eqnarray}
\mathbf{A}_2 (\mathbf{u}_2-\mathbf{u}_2')=0,
\end{eqnarray}
for all $\mathbf{A}_2\in \mathcal{A}_2'$, where $\mathbf{u}_2-\mathbf{u}_2'\neq 0$. This is impossible for any full rank $\mathbf{A}_2$. 

\section{Proof of Theorem \ref{thm:simultaneous}}\label{appdx:simultaneous}
\begin{proof}
We first assume that the higher order derivatives are zero. Then, the difference between the object function $U_n+R$ at system $n$ is given by
\begin{eqnarray}
&&U_n(\mathbf{u}_n^{t})+R(\mathbf{u}_n^{t},\mathbf{u}_{-n}^{t-1})-U_n(\mathbf{u}_n^{t-1})-R(\mathbf{u}_n^{t-1},\mathbf{u}_{-n}^{t-1})\nonumber\\
&=&\mathbf{a}_n\delta \mathbf{u}_n+\delta\mathbf{u}_n^T\mathbf{H}_n\delta\mathbf{u}_n.
\end{eqnarray}
where $\mathbf{a}=\nabla_{\mathbf{u}_n}(U_n+R)$ is the gradient vector and $\mathbf{H}$ is the Hessian matrix of $R$ as a function of $\mathbf{u}_n$. Since system $n$ maximizes the increase of the local object function, it should select
\begin{eqnarray}
\delta \mathbf{u}_n=-\frac{1}{2}\mathbf{H}_n^{-1}\mathbf{a}_n,
\end{eqnarray}
which implies
\begin{eqnarray}
&&U_n(\mathbf{u}_n^{t})+R(\mathbf{u}_n^{t},\mathbf{u}_{-n}^{t-1})-U_n(\mathbf{u}_n^{t-1})-R(\mathbf{u}_n^{t-1},\mathbf{u}_{-n}^{t-1})\nonumber\\
&=&\frac{1}{2}\mathbf{a}^T\mathbf{H}_n^{-1}\mathbf{a}_n.
\end{eqnarray}

Then, the difference of the social welfares at stages $t$ and $t-1$ is given by
\begin{eqnarray}
&&W(\mathbf{u}^{t})-W(\mathbf{u}^{t-1})\nonumber\\
&=&\frac{1}{2}\sum_{n=1}^N\mathbf{a}^T\mathbf{H}_n^{-1}\mathbf{a}_n
+\sum_{n\neq m}\delta \mathbf{u}_{n}^T\mathbf{H}_{nm}\delta \mathbf{u}_m\nonumber\\
&=&\frac{1}{2}\sum_{n=1}^N\delta \mathbf{u}_n^T\mathbf{H}_n\delta\mathbf{u}_n
+\sum_{n\neq m}\delta \mathbf{u}_{n}^T\mathbf{H}_{nm}\delta \mathbf{u}_m
\end{eqnarray}

When the elements in $\mathbf{H}_{nm}$ are sufficiently small, the above difference is positive. Since the higher order derivatives are also sufficiently small, the difference is still positive, even if the higher order terms are taken into account.

Therefore, the objective function value always increases. Since it is bounded, it converges. This concludes the proof.

\end{proof}

\section{Proof of Theorem ?}
First, we need the following lemma (Lemma 12.1.5. \cite{Facchinei2003}):
\begin{lemma}\label{lem:converge}
Suppose that a series of functions $\{F^k\}_k$ are co-coercive with constants $\{c_k\}_k$, which satisfies
\begin{eqnarray}
\rho=\inf_k c_k>\frac{1}{2}.
\end{eqnarray}
If all the functions $F^k$ have the same nonempty set $S$ of zeros and
\begin{eqnarray}
\rho=\inf_k c_k>\frac{1}{2}.
\end{eqnarray}
and
\begin{eqnarray}
\inf_k\|F^k(x)\|>0,\qquad \forall x\in S,
\end{eqnarray}
then the sequence $x^k$ produced by $x^{k}=x^{k-1}+F^k(x^{k-1})$ converges to a point $x^*$ in S. 
\end{lemma}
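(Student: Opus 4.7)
My plan is to prove the lemma by establishing Fejér monotonicity of the iterates with respect to any point in $S$, then using summability to force the residuals to zero, extracting a subsequential limit, identifying that limit as an element of $S$, and finally promoting subsequential convergence to full convergence.

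First, fix any $x^\ast\in S$, so that $F^k(x^\ast)=0$ for every $k$. Expand the squared distance along the iteration:
\begin{align*}
\|x^k-x^\ast\|^2
&=\|x^{k-1}-x^\ast+F^k(x^{k-1})\|^2\\
&=\|x^{k-1}-x^\ast\|^2+2\langle F^k(x^{k-1}),\,x^{k-1}-x^\ast\rangle+\|F^k(x^{k-1})\|^2.
\end{align*}
Applying co-coercivity of $F^k$ with constant $c_k$ (with the sign convention that makes the update a descent step, i.e.\ reading $\langle F^k(x^{k-1}),x^{k-1}-x^\ast\rangle\le -c_k\|F^k(x^{k-1})\|^2$ after using $F^k(x^\ast)=0$), one obtains
$$\|x^k-x^\ast\|^2\le \|x^{k-1}-x^\ast\|^2-(2c_k-1)\,\|F^k(x^{k-1})\|^2.$$
Since $\rho=\inf_k c_k>1/2$, the coefficient $2c_k-1\ge 2\rho-1>0$ uniformly, so $\{\|x^k-x^\ast\|\}$ is monotonically decreasing and hence convergent, and in particular $\{x^k\}$ lies in a bounded set.

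Second, telescoping the same inequality gives
$$(2\rho-1)\sum_{k\ge 1}\|F^k(x^{k-1})\|^2\le \|x^0-x^\ast\|^2<\infty,$$
so $\|F^k(x^{k-1})\|\to 0$. By boundedness, extract a convergent subsequence $x^{k_j}\to\bar x$. Now invoke the hypothesis $\inf_k\|F^k(x)\|>0$ (read, as must be intended, for $x\notin S$, since every point of $S$ annihilates every $F^k$) together with an appropriate equicontinuity or uniform regularity of $\{F^k\}$: if $\bar x\notin S$ we would have $\inf_k\|F^k(\bar x)\|>0$, contradicting $\|F^{k_j}(x^{k_j-1})\|\to 0$ after passing the limit through the family. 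Hence $\bar x\in S$.

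Finally, apply Fejér monotonicity once more with $x^\ast=\bar x$: the sequence $\{\|x^k-\bar x\|\}$ is convergent by the first step, and its subsequence $\{\|x^{k_j}-\bar x\|\}$ tends to zero, so the whole sequence converges to zero, giving $x^k\to\bar x\in S$. The main obstacle will be step three, because the stated condition $\inf_k\|F^k(x)\|>0$ is vacuous on $S$ itself; the argument really rests on the intended reading on $S^c$ and on some uniformity of the family $\{F^k\}$ that lets us pass the residual limit from the iterates to the accumulation point. Making that uniformity explicit, or replacing it by a closedness property of the common zero set $S$ under the map $x\mapsto\lim_k F^k(x)$, is the delicate ingredient of the proof.
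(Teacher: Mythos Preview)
The paper does not prove this lemma at all: it is quoted verbatim as Lemma~12.1.5 of Facchinei--Pang \cite{Facchinei2003} and used as a black box in the subsequent theorem. So there is no ``paper's own proof'' to compare against; your write-up is supplying what the paper deliberately omits.

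That said, your argument is the standard one and is essentially what appears in \cite{Facchinei2003}. Two remarks sharpen it. First, the sign issue you flag is real: as stated in the paper, with the co-coercivity inequality $(F(x)-F(y))^T(x-y)\ge c\|F(x)-F(y)\|^2$ and the update $x^k=x^{k-1}+F^k(x^{k-1})$, the cross term has the \emph{wrong} sign and the distance to $S$ increases. The intended reading (as in \cite{Facchinei2003}) is $x^k=x^{k-1}-F^k(x^{k-1})$, or equivalently $-F^k$ in place of $F^k$; once that is fixed your Fej\'er inequality is correct. Second, the ``uniform regularity'' you invoke in step three is not an extra hypothesis: co-coercivity with constant $c_k$ already forces $\|F^k(x)-F^k(y)\|\le c_k^{-1}\|x-y\|\le \rho^{-1}\|x-y\|$, so the family $\{F^k\}$ is uniformly Lipschitz. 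Hence if $x^{k_j}\to\bar x$ and $\|F^{k_j+1}(x^{k_j})\|\to 0$, then $\|F^{k_j+1}(\bar x)\|\le \|F^{k_j+1}(x^{k_j})\|+\rho^{-1}\|x^{k_j}-\bar x\|\to 0$, contradicting $\inf_k\|F^k(\bar x)\|>0$ unless $\bar x\in S$. With these two points made explicit your proof is complete; the condition ``$\forall x\in S$'' in the lemma is indeed a typo for ``$\forall x\notin S$,'' as you correctly surmised.
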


Then, we can prove the main theorem.
\begin{proof}
We define $F^k(\mathbf{u})=\tau_k(\nabla_{\mathbf{u}_1}R_1(\mathbf{u}),...,\nabla_{\mathbf{u}_1}R_1(\mathbf{u}))$, and
\begin{eqnarray}
\tilde{F}(\mathbf{u})=F(\mathbf{u}'),
\end{eqnarray}
where $\mathbf{u}'$ is obtained from (\ref{eq:stage1}), namely
\begin{eqnarray}
&&\nabla_{\mathbf{u}_n} U_n(\mathbf{u}_n')+\nabla_{\mathbf{u}_n}\Psi(\mathbf{u}_n',\mathbf{u}_{-n})\nonumber\\
&-&\lambda_k(\mathbf{u}_n'-\mathbf{u}_n)=0.
\end{eqnarray}

Then, for two points $\mathbf{u}$ and $\mathbf{v}$ in $\mathbb{R}^d$, we have
\begin{eqnarray}
&&(\tilde{F}^k(\mathbf{u})-\tilde{F}^k(\mathbf{v}))^T(\mathbf{u}-\mathbf{v})\nonumber\\
&=&(F^k(\mathbf{u})-F^k(\mathbf{v}))^T(\mathbf{u}-\mathbf{v})\nonumber\\
&+&(\mathbf{e}^k(\mathbf{u})-\mathbf{e}^k(\mathbf{v}))^T(\mathbf{u}-\mathbf{v}).
\end{eqnarray}
The first term satisfies 
\begin{eqnarray}
(F^k(\mathbf{u})-F^k(\mathbf{v}))^T(\mathbf{x}-\mathbf{y})\geq c_k^1\|F^k(\mathbf{u})-F^k(\mathbf{v})\|_2^2,
\end{eqnarray}
and the second term satisfies
\begin{eqnarray}
(\mathbf{e}^k(\mathbf{u})-\mathbf{e}^k(\mathbf{v}))^T(\mathbf{x}-\mathbf{y})\geq c_k^2\|\mathbf{e}^k(\mathbf{u})-\mathbf{e}^k(\mathbf{v})\|_2^2,
\end{eqnarray}
due to the assumption.

Meanwhile, we have 
\begin{eqnarray}
\|\tilde{F}^k(\mathbf{u})-\tilde{F}^k(\mathbf{v})\|_2^2&\leq& \|F^k(\mathbf{u})-F^k(\mathbf{v})\|_2^2\nonumber\\
&+&\|\mathbf{e}^k(\mathbf{u})-\mathbf{e}^k(\mathbf{v})\|_2^2.
\end{eqnarray}

Therefore, we have 
\begin{eqnarray}
(\tilde{F}^k(\mathbf{u})-\tilde{F}^k(\mathbf{v}))^T(\mathbf{x}-\mathbf{y})\geq \min\{c_k^1,c_k^2\}\|\tilde{F}^k(\mathbf{u})-\tilde{F}^k(\mathbf{v})\|_2^2.
\end{eqnarray}

This satisfies the condition of Lemma \ref{lem:converge}. The subsequent proof follows that of Theorem 12.1.8 in \cite{Facchinei2003}.

\end{proof}

\section{Proof of Theorem ?}
We denote by $\mathbf{u}^*(\mathbf{x})$ the optimal control action $\mathbf{u}$ at state $\mathbf{x}$.
We assume that the convergence of the algorithm with fixed $\mathbf{x}_)$ and initial value $\mathbf{u}_0$ satisfies
\begin{eqnarray}\label{eq:decay}
\|\mathbf{u}^k-\mathbf{u}^*(\mathbf{x}_0)\|\leq E_k\|\mathbf{u}_0-\mathbf{u}^*(\mathbf{x}_0)\|,
\end{eqnarray}
where $E_k$ decreases with $k$ and is independent of $\mathbf{x}_0$ and $\mathbf{u}_0$. For the two-stage and single-stage algorithms, we can write the dynamics as
\begin{eqnarray}
\mathbf{u}^k=\mathbf{u}^{k-1}+\tilde{F}(\mathbf{u}^{k-1},\mathbf{x}^{k-1}).
\end{eqnarray}

\begin{theorem}
Denote by $\lambda_B$ the maximum absolute value of the eigenvalues of $\mathbf{B}$. Assume that $\|\mathbf{u}\|\leq u_m$. We also define $h_{\tilde{F}}=\sup_{\mathbf{u},\mathbf{x}}\|H_{\tilde{F}}(\mathbf{u},\mathbf{x})\|_2$, where $H_{\tilde{F}}$ is the Hessian matrix of $\tilde{F}$. Then, we have
\begin{eqnarray}
\|\mathbf{u}^k-\mathbf{u}^*(\mathbf{x}^k)\|&\leq& \min_n\left\{\frac{h_{\tilde{F}}\lambda_Bnu_m}{1-\lambda_B}\right.\nonumber\\
&-&\left.\frac{h_{\tilde{F}}\lambda_B(1-\lambda_B^n)u_m}{(1-\lambda_B)^2}+2E_nu_m\right\}.
\end{eqnarray}
\end{theorem}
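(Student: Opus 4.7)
The plan is an $n$-step lookback argument. For any $n \in \{1,\dots,k\}$, I will introduce an auxiliary "frozen-state" trajectory $\tilde{\mathbf{u}}^{k-n}=\mathbf{u}^{k-n}, \tilde{\mathbf{u}}^{k-n+1}, \dots, \tilde{\mathbf{u}}^k$ obtained by running the same update rule $\mathbf{u}\mapsto \mathbf{u}+\tilde F(\mathbf{u},\mathbf{x})$ but with the state artificially held at $\mathbf{x}^{k-n}$ throughout the $n$ iterations. Decomposing with the triangle inequality,
\begin{align*}
\|\mathbf{u}^k-\mathbf{u}^*(\mathbf{x}^k)\|
&\le \|\mathbf{u}^k-\tilde{\mathbf{u}}^k\| + \|\tilde{\mathbf{u}}^k-\mathbf{u}^*(\mathbf{x}^{k-n})\| \\
&\quad + \|\mathbf{u}^*(\mathbf{x}^{k-n})-\mathbf{u}^*(\mathbf{x}^k)\|,
\end{align*}
and taking the infimum over $n$ at the end will produce the claimed minimum.

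The middle term is handled directly by the standing hypothesis~(\ref{eq:decay}): with initial condition $\mathbf{u}^{k-n}$ at the frozen state $\mathbf{x}^{k-n}$, we get $\|\tilde{\mathbf{u}}^k-\mathbf{u}^*(\mathbf{x}^{k-n})\|\le E_n\|\mathbf{u}^{k-n}-\mathbf{u}^*(\mathbf{x}^{k-n})\|$, and since both vectors have norm at most $u_m$, a triangle inequality gives $\le 2 E_n u_m$. This is exactly the last summand of the stated bound. For the first and third terms, I will quantify the state drift over the lookback window using the linear dynamics $\mathbf{x}(t+1)=\mathbf{A}\mathbf{x}(t)+\mathbf{B}\mathbf{u}(t)$ together with the assumption $\mathbf{A}\approx\mathbf{I}$ and the action bound $\|\mathbf{u}\|\le u_m$. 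Unrolling the recursion gives a drift of order $\lambda_B u_m (1-\lambda_B^j)/(1-\lambda_B)$ at step $j$, since each freshly injected $\mathbf{B}\mathbf{u}$ is subsequently carried along by the state transitions. Both the discrepancy $\|\mathbf{u}^k-\tilde{\mathbf{u}}^k\|$ (which accrues at each of the $n$ steps because the real update feeds $\mathbf{x}^{k-n+j}$ into $\tilde F$ while the frozen one feeds $\mathbf{x}^{k-n}$) and the optimal-action gap $\|\mathbf{u}^*(\mathbf{x}^{k-n})-\mathbf{u}^*(\mathbf{x}^k)\|$ are then controlled via the Lipschitz modulus $h_{\tilde F}$ of $\tilde F$ in its state argument — for the latter, $\mathbf{u}^*(\mathbf{x})$ is the zero of $\tilde F(\cdot,\mathbf{x})$, so an implicit-function argument converts $h_{\tilde F}$ into a sensitivity constant for $\mathbf{u}^*$ as well. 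Summing these per-step drifts over $j=1,\dots,n$ gives $\sum_{j=1}^{n}\frac{1-\lambda_B^j}{1-\lambda_B}=\frac{n}{1-\lambda_B}-\frac{\lambda_B(1-\lambda_B^n)}{(1-\lambda_B)^2}$, which (multiplied by $h_{\tilde F}\lambda_B u_m$) supplies the first two summands of the stated bound.

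The main obstacle is the bookkeeping for the first term $\|\mathbf{u}^k-\tilde{\mathbf{u}}^k\|$: one must argue that the per-iteration discrepancy injected by the mismatch in state arguments does not itself compound through the iteration — that is, that the algorithm map $\mathbf{u}\mapsto\mathbf{u}+\tilde F(\mathbf{u},\mathbf{x})$ is non-expansive (or at worst mildly expansive) in $\mathbf{u}$ near the optimum, so the accumulated error is essentially the straight sum of the instantaneous perturbations $h_{\tilde F}\|\mathbf{x}^{k-n+j-1}-\mathbf{x}^{k-n}\|$ rather than a second geometric amplification. This is plausible under the same co-coercivity conditions invoked earlier for convergence, and I would leverage those properties together with the smoothness implied by $h_{\tilde F}<\infty$ to close the bound and recover the form $\min_n\{\cdots\}$ as stated.
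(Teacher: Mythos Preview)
Your lookback argument is the right idea, but you froze the auxiliary trajectory at the \emph{wrong} state. The paper's proof freezes at the \emph{current} state $\mathbf{x}^{k}$, not at $\mathbf{x}^{k-n}$: it defines $\bar{\mathbf{u}}^n(\mathbf{x}^k,\mathbf{u}^{k-n})$ as the result of running $\tilde F(\cdot,\mathbf{x}^{k})$ for $n$ steps starting from $\mathbf{u}^{k-n}$. This yields only a \emph{two}-term decomposition,
\[
\mathbf{u}^k-\mathbf{u}^*(\mathbf{x}^k)=\bigl(\mathbf{u}^k-\bar{\mathbf{u}}^n(\mathbf{x}^k,\mathbf{u}^{k-n})\bigr)+\bigl(\bar{\mathbf{u}}^n(\mathbf{x}^k,\mathbf{u}^{k-n})-\mathbf{u}^*(\mathbf{x}^k)\bigr),
\]
and the second term is bounded by $2E_n u_m$ directly from~(\ref{eq:decay}), since the frozen trajectory already targets $\mathbf{u}^*(\mathbf{x}^k)$. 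Your third term $\|\mathbf{u}^*(\mathbf{x}^{k-n})-\mathbf{u}^*(\mathbf{x}^k)\|$ and the accompanying implicit-function argument simply disappear. This matters for the final constants: in your version that extra term would add another $h_{\tilde F}$-weighted drift contribution, and your bound would not reduce to the exact expression in the statement.

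For the first term, the paper telescopes $\mathbf{u}^k-\bar{\mathbf{u}}^n=\sum_{p=0}^{n-1}(\delta\mathbf{u}^{k-p}-\delta\bar{\mathbf{u}}^{k-p})$ and bounds each summand by $h_{\tilde F}\|\mathbf{x}^{k-p}-\mathbf{x}^k\|$, then uses $\|\mathbf{x}^{k-p}-\mathbf{x}^k\|\le\sum_{q=1}^{p}\lambda_B^{q}u_m=\lambda_B(1-\lambda_B^{p})u_m/(1-\lambda_B)$. Summing over $p=0,\dots,n-1$ gives exactly the two drift summands in the statement. Your worry about compounding through the $\mathbf{u}$-argument is legitimate, but the paper sidesteps it simply by bounding the increment difference $\tilde F(\mathbf{u}^{k-p-1},\mathbf{x}^{k-p-1})-\tilde F(\bar{\mathbf{u}}^{k-p-1},\mathbf{x}^{k})$ through the state mismatch alone; no non-expansiveness argument is invoked.
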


\begin{proof}
Fixing an $n$, we rewrite the error $\mathbf{u}^k-\mathbf{u}^*(\mathbf{x})$ as 
\begin{eqnarray}\label{eq:decompose}
\mathbf{u}^k-\mathbf{u}^*(\mathbf{x}^k)&=&\mathbf{u}^k-\bar{\mathbf{u}}^n(\mathbf{x}^k,\mathbf{u}^{k-n})\nonumber\\
&+&\bar{\mathbf{u}}^n(\mathbf{x}^k,\mathbf{u}^{k-n})-\mathbf{u}^*(\mathbf{x}^k),
\end{eqnarray}
where $\bar{\mathbf{u}}^n(\mathbf{x},\mathbf{u}^{k-n})$ is the outcome of applying $\tilde{F}$ for $n$ times, beginning from $\mathbf{u}^{k-n}$ and fixing $\mathbf{x}^k$. 

According to (\ref{eq:decay}), the second term of (\ref{eq:decompose}) is bounded by
\begin{eqnarray}
\|\bar{\mathbf{u}}^n(\mathbf{x}^k,\mathbf{u}^{k-n})-\mathbf{u}^*(\mathbf{x}^k)\|&\leq& E_n\|\mathbf{u}^{k-n}-\mathbf{u}^*(\mathbf{x}^k)\|\nonumber\\
&\leq& 2E_nu_m.
\end{eqnarray}

For the first term, we define 
\begin{eqnarray}
\delta \mathbf{u}^l=\mathbf{u}^l-\mathbf{u}^{l-1}=\tilde{F}(\mathbf{u}^{l-1},\mathbf{x}^{l-1}),
\end{eqnarray}
and 
\begin{eqnarray}
\delta \bar{\mathbf{u}}^l=\bar{\mathbf{u}}^l-\bar{\mathbf{u}}^{l-1}=\tilde{F}(\bar{\mathbf{u}}^{l-1},\mathbf{x}^{k}),
\end{eqnarray}

Then, we have 
\begin{eqnarray}
&&\|\mathbf{u}^k-\bar{\mathbf{u}}^n(\mathbf{x}^k,\mathbf{u}^{k-n})\|\nonumber\\
&=&\left\|\sum_{p=0}^{n-1} \delta \mathbf{u}^{k-p}-\delta\bar{\mathbf{u}^{k-p}}\right\|\nonumber\\
&\leq &\sum_{p=0}^{n-1}\left\| \delta \mathbf{u}^{k-p}-\delta\bar{\mathbf{u}^{k-p}}\right\|\nonumber\\
&\leq & \sum_{p=0}^{n-1}\mathbf{H}_{\tilde{F}} \|\mathbf{x}^{k-p}-\mathbf{x}^k\|  \nonumber\\
&\leq & \sum_{p=0}^{n-1}h_{\tilde{F}}\frac{\lambda_B(1-\lambda_B^p)}{1-\lambda_B}\nonumber\\
&=&\frac{h_{\tilde{F}}\lambda_Bnu_m}{1-\lambda_B}-\frac{h_{\tilde{F}}\lambda_B(1-\lambda_B^n)u_m}{(1-\lambda_B)^2}.
\end{eqnarray}
where we used the fact that
\begin{eqnarray}
\|\mathbf{x}^{k-p}-\mathbf{x}^k\|\leq \sum_{q=1}^p \lambda_B^qu_m=\frac{\lambda_B(1-\lambda_B^p)}{1-\lambda_B}.
\end{eqnarray}

This concludes the proof by choosing the $n$ that minimizes the upper bound. 
\end{proof}

\section{Proof of Theorem \ref{thm:infinity}}
\begin{theorem}\label{thm:infinity}
The dimension of the message space for the computing task in (?) is infinite, if $\Theta$ cannot be written as 
\begin{eqnarray}\label{eq:decompose}
\Theta(\mathbf{u}_1,...,\mathbf{u_N})=\phi_1(\mathbf{u}_1)+...+\phi_1(\mathbf{u}_N).
\end{eqnarray}
\end{theorem}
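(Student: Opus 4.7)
The plan is to argue the contrapositive: assume $\dim(M)<\infty$ and deduce that $\Theta$ must be separable as in (\ref{eq:decompose}). Under this assumption, the Frobenius-type theorem for mechanism design (Theorem~\ref{thm:Frobenius}) produces, for each agent $n$, a foliation $S_n$ of the utility parameter space with finite codimension $c_n$, where $\sum_n c_n \le \dim(M)$. Any two utilities $U_n, U_n'$ lying in the same leaf of $S_n$ must yield an identical output $F$ for every choice of $U_{-n}$, and the coupled first-order condition $\nabla_{\mathbf{u}_n} U_n(\mathbf{u}_n^*) = -\nabla_{\mathbf{u}_n}\Theta(\mathbf{u}^*)$ that characterizes $F$ then forces $\nabla U_n(\mathbf{u}_n^*) = \nabla U_n'(\mathbf{u}_n^*)$ at every attainable optimum $\mathbf{u}_n^*$.

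The next step is to leverage the non-separability of $\Theta$ to show that the set of attainable $\mathbf{u}_n^*$ contains an open subset of $\mathbb{R}^d$. Non-separability means that for some $m\neq n$ the mixed partial $\partial^2_{\mathbf{u}_n\mathbf{u}_m}\Theta$ is not identically zero. Perturbing $U_m$ within its admissible class, for instance by adding a localized smooth bump that shifts $\nabla U_m$ near a chosen point, moves $\mathbf{u}_m^*$ continuously through a neighborhood; the implicit function theorem applied to the coupled FOC system, whose Jacobian is invertible thanks to the positive-definiteness of the utility Hessians, then propagates this variation to an open perturbation of $\mathbf{u}_n^*$ in the direction determined by the non-vanishing mixed partial. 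Combined with the first paragraph, $\nabla U_n$ and $\nabla U_n'$ agree on an open subset of $\mathbb{R}^d$, and by smoothness and connectedness $U_n - U_n'$ is globally constant. Since additive constants do not affect $F$, the leaf of $S_n$ collapses to a single point modulo constants, forcing $c_n$ to equal the full (infinite) dimension of the utility function space modulo constants, which contradicts $\dim(M)<\infty$.

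The main obstacle is the open-sweep step: one must verify that admissible perturbations of $U_m$ really do cover an open neighborhood of $\mathbf{u}_m^*$ and that the cross-coupling through $\partial^2_{\mathbf{u}_n\mathbf{u}_m}\Theta$ transfers this into a non-degenerate open variation of $\mathbf{u}_n^*$. A clean way to handle this is to pick a chart centered at a point where the mixed partial attains maximal rank, apply the implicit function theorem there to get local surjectivity of the map $U_m \mapsto \mathbf{u}_n^*$, and then propagate the resulting local agreement $U_n \equiv U_n'$ (modulo constants) to the full domain either by analyticity of the admissible class or, if only smoothness is assumed, by a partition-of-unity argument combined with connectedness of the region on which the coupled FOC admits a unique smooth solution branch.
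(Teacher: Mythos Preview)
Your outline is sound and reaches the same contradiction, but it takes a genuinely different route from the paper's argument. Both begin with the contrapositive and invoke the Frobenius-type product-foliation structure (Theorem~\ref{thm:Frobenius}). From there the paper works \emph{on $\Theta$}: fixing $U_1$ and varying $U_2'$ so as to sweep $u_2'$ while the level-set independence keeps $u_1^*$ pinned, the FOC $\partial_{u_1}U_1(u_1^*)=-\partial_{u_1}\Theta(u_1^*,u_2')$ shows that $\partial_{u_1}\Theta$ does not depend on $u_2$; one then integrates to build $\phi_1,\phi_2$ explicitly and obtains the additive decomposition directly. You instead work \emph{on the utility space}: fixing two utilities $U_n,U_n'$ in the same leaf and varying $U_m$ to sweep $\mathbf{u}_n^*$ through an open set (using the nonvanishing mixed partial of $\Theta$), you force $\nabla U_n\equiv\nabla U_n'$ on that set, hence the leaf is a single point modulo constants and $c_n=\infty$.

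What each buys: the paper's route is shorter and constructive---it produces the decomposition $\phi_1+\phi_2$ in two lines once the independence of $\partial_{u_1}\Theta$ on $u_2$ is established---and it never needs to globalize a local identity of utilities. Your route is more robust to generality (you never specialize to $N=2$, $d=1$) and makes the role of the cross-coupling $\partial^2_{\mathbf{u}_n\mathbf{u}_m}\Theta$ explicit via the implicit function theorem, at the cost of the open-sweep step and the analyticity/partition-of-unity globalization you flag as the main obstacle. If you want to tighten your write-up, note that the paper's key observation---that the FOC lets you read $\partial_{\mathbf{u}_n}\Theta$ off as a function of $\mathbf{u}_n^*$ alone once the level-set independence is in hand---would let you bypass the leaf-collapse argument entirely and conclude separability of $\Theta$ directly.
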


\begin{proof}
For notational simplicity, we assume $N=2$ and $d=1$, namely the two-subsystem scalar dynamics, which does not lose the generality. We consider only the computation of $u_1$, such that the output function $F$ is scalar. Moreover, we assume that $U_1(0)$ is a constant for all possible $U_1'$, and so is $U_2$. Then, $U_1$ and $U_1+C$ ($C\neq 0$) are considered to be the same function. 

Suppose that (\ref{eq:decompose}) does not hold while the message space dimension is finite. Our goal is to prove that the dimension $c_1$ of useful information, for the purpose of computing, cannot be bounded. To this end, we assume that $c_1,c_2<\infty$. According to Theorem \ref{thm:Frobenius}, the level set $S_i(\theta')$ is independent of $w_2$, while $d_2>0$. We consider $(U_1,U_2)$ such that the solution is $(u_1^*,u_2^*)$. For any $u_2'$, we can find a $U_2'$ such that the output is $(u_1^*,u_2)$. Due to the independence of $S_1$ on $U_2$, we have $S_1(U_1,U_2)=S_1(U_1,U_2')$ as the level set of $u_1^*$. According to the equation
\begin{eqnarray}
\frac{\partial U_1}{\partial u_1}\bigg|_{u_1^*}= -\frac{\partial \Theta}{\partial u_1}\bigg|_{u_1^*,u_2'},
\end{eqnarray}
and the arbitrary values $u_1^*$ and $u_2'$, we obtain that $\frac{\partial \Theta}{\partial u_1}$ is independent of $u_2$. Then, we define 
\begin{eqnarray}
\phi_1(u_1)=W(0,0)+\int_0^{u_1}\frac{\partial \Theta(v_1,0)}{\partial v_1}dv_1,
\end{eqnarray}
and further
\begin{eqnarray}
\phi_2(u_1,u_2)=\Theta(u_1,u_2)-\phi_1(u_1).
\end{eqnarray}
It is easy to show that
\begin{eqnarray}
\frac{\phi_2(u_1,u_2)}{\partial u_1}=0.
\end{eqnarray}
Therefore, $\phi_2(u_1,u_2)=\phi_2(u_2)$. In this way, we have 
\begin{eqnarray}
\Theta(u_1,u_2)=\phi_1(u_1)+\phi_2(u_2).
\end{eqnarray}

This concludes the proof by the contradiction.

\end{proof}

\end{document}